\tikzset{
->, % makes the edges directed
>=stealth, % makes the arrow heads bold
node distance=3cm, % specifies the minimum distance between two nodes. Change if necessary.
every state/.style={thick, fill=gray!10}, % sets the properties for each ’state’ node
initial text=$ $, % sets the text that appears on the start arrow
}
\newenvironment{todo}{\medskip\hrule\smallskip\noindent}{\smallskip\hrule\medskip}
\newcommand{\hd}{\mbox{$\mathit{hd}$}}
\newcommand{\eqbydef}{\mbox{$\triangleq$}}
\newcommand{\Paths}{\mbox{$\mathit{Paths}$}}
\newcommand{\fPaths}{\mbox{$\mathit{fPaths}$}}
\newcommand{\hypo}{\mbox{$\mathsf{Hyp}$}}
\newcommand{\conc}{\mbox{$\mathsf{Con}$}}
\newcommand{\lastOcc}{\mbox{$\mathit{last}$}}
\newcommand{\true}{\mbox{\textsc{t}}}
\newcommand{\false}{\mbox{\textsc{f}}}
\newcommand{\post}{\mbox{$\mathit{\partial}$}}
\newcommand{\reach}{\mbox{$\mathit{\,\Rightarrow\!\!\!\Diamond}\,$}}
\newcommand{\rreach}{\mbox{\scriptsize{$\mathit{\,\Rightarrow\!\!\!\Diamond}\,$}}}
\newcommand{\lhs}{\mathit{lhs}}
\newcommand{\rhs}{\mathit{rhs}}
\newcommand{\len}{\mbox{$\mathit{len}$}}
\newcommand{\suf}{\mbox{$\mathit{suf}$}}
\newcommand{\final}{\bullet}
\newcommand{\calP}{\mbox{$\mathcal{P}$}}
\newcommand{\calH}{\mbox{$\mathcal{H}$}}
\newcommand{\calS}{\mathcal{S}}
\newcommand{\calD}{\mbox{$\mathcal{D}$}}
\newtheorem{definition}{Definition}
\newtheorem{lemma}{Lemma}
\newtheorem{theorem}{Theorem}
\newtheorem{example}{Example}
\newenvironment{proof}{\textit{Proof}}{\hfill $\Box$}
\title{(Co)inductive Proof Systems  for Compositional Proofs in Reachability Logic}
\author{Vlad Rusu
\institute{Inria\\ Lille, France}
\and
David Nowak\institute{CRIStAL\thanks{Univ. Lille, CNRS, Centrale Lille, UMR 9189 - CRIStAL - Centre de Recherche en Informatique Signal et Automatique de Lille, F-59000 Lille, France}\\ Lille, France}
}
\begin{document} 

\maketitle

%\david{Je trouve un peu bizarre de capitaliser toutes les sections (mais c'est parfois demand\'e par le style), et tr\`es bizarre quand c'est les titres de figures.}

\begin{abstract}
  Reachability Logic is  a formalism that can be used, among others, for expressing partial-correctness properties  of transition  systems.  In
  this paper we present three proof systems for this formalism,  all of which are sound and complete and
  inherit the  coinductive nature of
  the logic.  The proof systems differ, however, in several aspects.  First,
  they use induction and coinduction in different proportions.  
  The second  aspect   regards
  compositionality,   broadly  meaning  their ability  to  prove  simpler
  formulas  on smaller  systems, and  to  reuse those  formulas as lemmas for  more
  complex formulas on larger systems.  The third aspect is the difficulty
  of their soundness proofs. We show that the more induction  a proof system uses,
  %and the closest its coinduction style to our problem domain of proving reachability-logic formulas, 
%\david{It is too complicated. I suggest the simpler following rewording:}
and the more specialised is its use of coinduction (with respect to our problem domain),
  the more  compositional the proof system is, but the more difficult its soundness proof becomes.
   We also briefly present  mechanisations of these results in
  the Isabelle/HOL and Coq proof assistants.
\end{abstract}

\section{Introduction}
Reachability Logic (RL)~\cite{DBLP:journals/lmcs/StefanescuCMMSR19} has been introduced as a language-parametric program logic: a formalism for specifying the functional  correctness of programs, which may belong to any programming language whose
operational semantics is also specified in RL.  The functional correctness of a program is
stated as the validity of a set of RL formulas (specifying the program's expected properties) with respect to another set of RL formulas (specifying the operational semantics of the language containing the program).

Such  statements are proved by means of a proof system, which has  adequate meta-properties with respect to validity: soundness (i.e., only valid RL formulas can be proved) and relative completeness (all valid RL formulas can, in principle, be proved, modulo the existence of ``oracles'' for auxiliary tasks).  The proof of  meta-properties for the RL proof system is highly nontrivial, but it only needs to be done once.

Program logics already have a half-century history between them,  from the first occurrence of Hoare logic~\cite{DBLP:journals/cacm/Hoare69} to  contemporary separation logics~\cite{DBLP:journals/cacm/OHearn19}. However, all those logics  depend on a
language's syntax and therefore have to be defined over and over again, for each new language (or even, for each new language version).
In particular, the  meta-properties of the corresponding proof systems should  be reproved over and over again, a 
 tedious task that is often  postponed to an indeterminate future.

Despite being language-parametric, Reachability Logic does not come in only one version.
Several  versions of the logic have been proposed over the years~\cite{DBLP:conf/lics/RosuSCM13,DBLP:conf/oopsla/StefanescuPYLR16,DBLP:journals/lmcs/StefanescuCMMSR19}. The formalism has been generalised from programming languages to more abstract models: rewriting logic~\cite{DBLP:conf/birthday/LucanuRAN15,DBLP:conf/lopstr/SkeirikSM17} and transition systems~\cite{DBLP:conf/tase/RusuGH18}, which can be used for specifying designs, and verifying them before they are implemented in program code. This does not replace code verification, just as code verification does not replace the testing of the final running software;  but it enables the early catching of errors and the early discovery of key functional-correctness properties, all of which are known to have practical, cost-effective benefits.

\paragraph*{Contributions.} We further study RL on transition systems (TS). We propose  three proof systems for
RL, and formalise them in the Coq~\cite{DBLP:series/txtcs/BertotC04} and Isabelle/HOL~\cite{DBLP:books/sp/NipkowPW02} proof assistants. One may 
naturally ask: why having several proof systems and  proof assistants - why not one of each? 
The answer is manyfold:
\begin{itemize}
  
\item the proof systems we propose have  some common features: the soundness and completenes meta-properties, and the
  coinductiveness nature inherited from RL. However, they do differ in others aspects:  (i) the ``amount''  of induction they contain; (ii)  their degree of compositionality (i.e., their ability to prove local formulas  on ``components'' of a TS, and then to use those formulas as lemmas in proofs of global formulas on the TS); and (iii) the difficulty level of their soundness proofs.

\item  we show that the more induction  a proof system uses, and the closest its coinduction style to our problem domain of proving reachability-logic formulas, 
  the more  compositional the proof system is, but the more difficult its soundness proof. There is a  winner: the most compositional proof system of the three, but we found that the other ones exhibit  interesting, worth-presenting features as well.

\item Coq and Isabelle/HOL have different styles of coinduction: Knaster-Tarski style vs.\ Curry-Howard style. Experiencing this first-hand with the nontrivial examples constituted by proof systems suggested a spinoff project, which amounts to porting some of the features of one proof assistant into the other one. For the moment, porting  Knaster-Tarski features into the Curry-Howard coinduction of Coq produced promising results, with possible practical impact for a broader class of Coq users.
\end{itemize}

\paragraph{Related Work.}  Regarding RL, most papers in the  above-given list of  references mention its coinductive nature,
but do not actually use it. Several Coq mechanisations of soundness proofs for RL proof systems are presented,  but Coq's coinduction is absent from them.  In~\cite{DBLP:conf/cade/CiobacaL18,DBLP:journals/jsc/LucanuRA17} coinduction is used for formalising RL and  for proving RL properties 
for programs and for term-rewriting systems,
but their approach is not mechanised in a proof assistant.
More closely related work to ours is reported in~\cite{DBLP:conf/esop/MoorePR18}; they attack, however, the problem
exactly in the opposite way: they develop a general theory of coinduction in Coq and use it to verify programs directly based on the semantics of  programming languages, i.e., without using a proof system. They do show that  a proof system for RL is an instance of their approach for theoretical reasons, in order to give a formal meaning to the  completeness of their approach.

Regarding coinduction in Isabelle/HOL, which is based on the Knaster-Tarski fixpoint theorems, we used only a small portion of what is available: coinductive predicates, primitive coinductive datatypes and primitive corecursive functions. More advanced developments are reported in~\cite{DBLP:conf/esop/BlanchetteBL0T17}. Regarding coinduction in Coq, it is based on the Curry-Howard isomorphism that views proofs as programs, hence, coinductive proofs are well-formed corecursive programs~\cite{DBLP:conf/types/Gimenez94}. An approach that bridges the gap between this  and
the Knaster-Tarski style of coinduction is~\cite{DBLP:conf/popl/HurNDV13}.
A presentation of our own  results on porting Knaster-Tarski style coinduction to Coq and a detailed comparison with the above is left for future work.

Regarding coinduction in formal methods, we note that it is mostly used for proving bisimulations. The book~\cite{sangiorgi2011} serves as introduction to both these notions and explores the relationships between them.

Regarding compositional verification, most existing techniques decompose proofs among  parallel composition
operators. Various compositional methods for various parallel composition operators (rely-guarantee for variable-based composition, assumption-commitment for synchronisation-based composition, \ldots) are presented in the book~\cite{DBLP:books/cu/RoeverBH2001}.
We employ compositionality in a different sense - structural, for transition systems, and logical, for formulas.
We note, however, that many of the techniques presented in~\cite{DBLP:books/cu/RoeverBH2001} have a coinductive
nature, which could perhaps be  
 exploited in future versions of RL proof systems.

\paragraph*{Organisation.} The next section recaps preliminary notions: Knaster-Tarski style induction and coinduction, transition systems, and RL on transition systems. A first compositionality result, of RL-validity with respect to certain sub-transition systems, is given. The  three following sections present our three proof systems in increasing order of complexity. Soundness and completeness results are given  and a notion of compositionality  with respect to formulas,  in two versions: asymmetrical and symmetrical, is introduced and combined with the compositionality regarding sub-transition systems. The three proof systems are shown to have increasingly demanding compositionality features. We then briefly discuss the mechanisations of the proof systems in the Coq and Isabelle/HOL proof assistants before we  present future work and conclude.
%Most proofs are placed in a separated Appendix (for reviewing purposes only).
The Coq and Isabelle/HOL formalisations, as well as a full paper containing proofs of all the results, are available at \url{http://project.inria.fr/from2019}.
%\david{Il n'y a pas de \texttt{s} à \texttt{project}. Mais, m\^eme en corrigeant cette typo, il n' a pas (encore) de page \texttt{from2019}.}

\section{Preliminaries}
\label{sec:prelim}
\subsection{Induction and Coinduction}

Consider a complete lattice $(L, \sqsubseteq, \sqcup, \sqcap, \bot, \top)$
and a monotone function $F : L \to L$. According to the Knaster-Tarski fixpoint theorem, $F$ has a least fixpoint $\mu F$ (respectively, greatest fixpoint $\nu F$), which is the least (respectively, greatest) element of $L$ such that $F(x) \sqsubseteq x$ (resp.\ $x \sqsubseteq F(x)$). From this one deduces Tarski's induction and coinduction principles:
$\mbox{$F(x) \sqsubseteq x$ implies $\mu F \sqsubseteq x$}$
and $\mbox{$x \sqsubseteq F(x)$ implies $x \sqsubseteq \nu F$}$.

Those principles can be used
to define inductive and coinductive datatypes and recursive and corecursive functions. For example, the type of natural numbers
is defined as the least fixpoint of the function $F(X) = \{ 0 \} \cup \{ \mathit{Suc}(x) \mid x \in X \}$.
The greatest fixpoint of $F$ is the type of natural numbers with infinity.
%Both are well defined because a powerset forms a complete lattice.

%\david{ $_{\calS}$ est difficile à distinguer visuellement de $\tau$. Utiliser  $_{\cal S}$ plutôt que  $_{\calS}$ ?}
As another example, let $\calS = (S, \to)$ be a transition system where $S$ is the set of states and $\to \; \subseteq \; S \times S$ is the transition relation.
A state $s$ is \emph{final}, and we write $\final\,  s$, if there exists no $s'$ such that $s \to s'$.
A path is a nonempty, possibly infinite sequence of states. More formally, the set $\Paths$ of paths   is the greatest fixpoint $\nu F$, where
$F(X) \;\; = \;\; \{ s \; \mid \; \final\, s  \} \; \cup \; \{ s \, \tau \; \mid \; s \in S \; \wedge \; \tau \in X \; \wedge  \; s \to (\hd\, \tau) \}$,  with
$\hd:\Paths \to S$ being simultaneously defined as $\hd(s) = s$ and $\hd(s\, \tau) = s$ for all $s \in S$ and $\tau \in X$.
One can then corecursively define the length of a path as a value in  the  natural numbers with infinity:
$\len\,  s = 0$ and $\len(s\, \tau) = \mathit{Suc}(\len\, \tau)$.

Hereafter, whenever necessary, we emphasise the fact that certain  notions are relative to a  transition system  $\calS$ by postfixing them with $_\calS$. We omit this subscript when it can be inferred from the context.

A complete lattice associated to a transition system $\calS = (S, \to)$,  is the set of state predicates $\Pi$
defined as the set of functions from $S$ to the set of Booleans $\mathbb{B} = \{ \false, \true \}$. Its operations are defined by 
$p \sqsubseteq q \; \eqbydef \; \forall s, p\, s \Rightarrow q\, s$, 
$(p \sqcup q)\, s\; \eqbydef \;\; p\, s \vee q\, s$,  $(p \sqcap q)\, s \; \eqbydef \; p\, s \wedge q\, s$,
$\bot\, s \; \eqbydef \; \false$, $\top\, s \; \eqbydef \; \true$.
We  also extend the transition relation $\to$ of  $\calS$ into a \emph{symbolic transition function} $\post :  \Pi \to \Pi$, defined by
$\post p \; \eqbydef \; \lambda s \;.\; \exists s' \;.\; p\, s' \, \wedge \, s' \to s$.

It is sometimes  convenient to use a stronger variant of Tarski's coinduction principle:
$X \sqsubseteq F(X \sqcup \nu F)$ iff $X \sqsubseteq  \nu F$.
Regarding induction, it is sometimes  convenient to use \emph{continuous} functions, i.e., functions $F$ such that
$F(\bigsqcup_{i \in I} x_i) = \bigsqcup_{i \in I}(F x_i)$, and use Kleene's fixpoint theorem:  $\mu F$ exists and is equal to $\bigsqcup_{n = 0}^{\infty} F^n(\bot)$.

\subsection{Reachability Formulas}
%\david{What is the best reference for Reachability Logic?}
We adapt Reachability Logic  to  transition systems. Assume a transition system $\calS = (S,\to)$.
Syntactically,
a reachability formula (or, simply, a formula) over $\calS$ is a pair $p \reach q$ with  $p,q \in \Pi$. We let $\lhs(p \reach q) \;\eqbydef\; p$ and  $\rhs(p \reach q) \;\eqbydef\; q$. 
We denote by $\Phi_\calS$ the set of all reachability formulas over the transition system $\calS$.

\begin{example}
  Figure~\ref{fig:sum} depicts  an extended finite-state machine having three natural-number variables: $i$,$s$, and $m$, and three control nodes: $c_0$, $c_1$, and $c_2$.
  Arrows connect the nodes and are possibly decorated with a Boolean guard and a set of parallel assignments of the variables. The variable $m$ is never assigned, thus, it stays constant. The  purpose of the machine is to compute in $s$ the sum of the first $m$ natural numbers.

  The machine is a finite representation of an infinite-state transition system whose 
  state-set is the Cartesian product $ \{c_0, c_1, c_2\} \times \mathbb{N}^3$
  %\times \mathbb{N}\times \mathbb{N}$
  and whose  transition relation is
  $\bigcup_{i,s,m \in \mathbb{N}}\{((c_0,i,s,m),(c_0,0,0,m))\}$ $\cup \bigcup_{i, s,m \in \mathbb{N}, i <m}\{((c_1,i,s,m),(c_1,i+1,s+i+1,m))\} \cup \bigcup_{i, s,m \in \mathbb{N}, i \geq m}\{((c_1,i,s,m),(c_2,i,s,m))\}$.
   A  formula expressing the transition systems's functional correctness is
$(c = c_0)\reach (c =c_2 \wedge s =  m\times (m+1)/2)$. 
  
\begin{figure}[t]
\begin{center}
\begin{tikzpicture}
\node[state] (c1) {$c_0$}; 
\node[state, right of=c1] (c2) {$c_1$};
\node[state, right of=c2] (c3) {$c_2$};
\draw
(c1) edge[above] node[yshift = -1.1mm]{\small  $\begin{array}{l} i := 0 \\ s := 0 \end{array}$} (c2)
(c2) edge[loop] node[yshift = 7mm]{\small  $\begin{array}{l} i < m  \\ i := i+1 \\ s := s+i+1 \end{array}$} (c2)
(c2) edge[above] node{\small $i \geq m$} (c3);
\end{tikzpicture}
\end{center}
\vspace*{-4mm}
\caption{\label{fig:sum} Sum up to $m$}
\vspace*{-3mm}
 \end{figure}
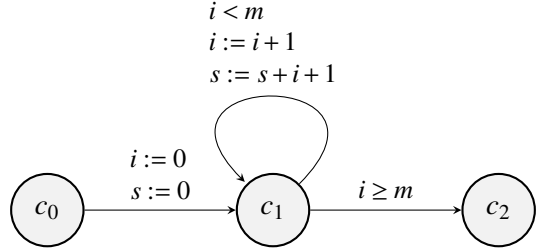
\end{example}  

In order to define the semantics of reachability formulas we first introduce the following relation.

\begin{definition}%[$\leadsto$ Relation]
  \label{def:leadsto}
  $\leadsto$ is the largest set of pairs $(\tau, r) \in  \Paths \times \Pi$ such that: (i) $\tau = s$ for some $ s\in S$, and $r\, s$;
  or (ii) $\tau = \ s\, \tau'$, for some $s \in S$, $\tau'\in \Paths$, and  $r\, s$;  or (iii)  $\tau = \ s\, \tau'$ for some $s \in S$, $\tau'\in \Paths$,  and $(\tau',r) \in \leadsto$.  
\end{definition}
We write $\tau \leadsto r$
  for  $(\tau,r) \in \leadsto$. Tarski's  principle  induces the following coinduction principle for $\leadsto$:

\begin{lemma} 
\label{lem:coindleadsto}
  For  $R \subseteq Paths \times \Pi$,
 if for all $(\tau, r) \in R$,  it holds that either $ (\exists s. \tau = s \wedge r \, s)$, or  $(\exists s.\exists \tau'.\tau = s\, \tau' \wedge r \, s) $  or $ (\exists s.\exists \tau'. \tau = s\, \tau' \wedge (\tau',r) \in R) $,  
    then $R \subseteq\,  \leadsto$.
\end{lemma}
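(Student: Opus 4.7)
The plan is to recognise that the relation $\leadsto$ is, by its very definition, a greatest fixpoint, so the lemma is an instance of the Tarski coinduction principle recalled in the preliminaries. Concretely, I would introduce the operator $F : \calP(\Paths \times \Pi) \to \calP(\Paths \times \Pi)$ defined by
\[
F(R) \;\eqbydef\; \{(\tau,r) \mid \exists s. \tau = s \wedge r\, s\} \;\cup\; \{(\tau,r) \mid \exists s, \tau'. \tau = s\,\tau' \wedge r\, s\} \;\cup\; \{(\tau,r) \mid \exists s, \tau'. \tau = s\,\tau' \wedge (\tau',r) \in R\}.
\]
Then Definition~\ref{def:leadsto} is exactly the statement that $\leadsto$ is the greatest post-fixpoint of $F$, equivalently $\nu F$ by Knaster--Tarski.

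Next I would verify that $F$ is monotone with respect to subset inclusion on $\calP(\Paths \times \Pi)$. This is immediate: the first two disjuncts in the definition do not mention $R$ at all, and the third is obviously monotone in $R$ (if $R \subseteq R'$ and $(\tau',r) \in R$ then $(\tau',r) \in R'$). Hence the Knaster--Tarski framework of the preliminaries applies and the coinduction principle $X \sqsubseteq F(X)$ implies $X \sqsubseteq \nu F$ is available.

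Finally, I would observe that the hypothesis of the lemma, unfolded, says exactly that for every $(\tau,r) \in R$ one has $(\tau,r) \in F(R)$, i.e.\ $R \subseteq F(R)$. Applying Tarski's coinduction principle gives $R \subseteq \nu F = \;\leadsto$, which is the desired conclusion.

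I do not expect any genuine obstacle: the only point to be careful about is the bookkeeping involved in matching the three disjuncts of Definition~\ref{def:leadsto} with the three disjuncts in the lemma's hypothesis, and in asserting that ``largest set satisfying \ldots'' really means the greatest fixpoint of the natural monotone operator. If one preferred a more direct argument avoiding the explicit use of $\nu F$, one could alternatively show that $R \,\cup\, \leadsto$ satisfies the closure properties of Definition~\ref{def:leadsto}, whence $R \,\cup\, \leadsto \;\subseteq\; \leadsto$ by maximality, giving $R \subseteq\, \leadsto$; but this is just Tarski's principle in disguise.
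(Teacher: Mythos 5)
Your proof is correct and follows exactly the route the paper intends: the paper introduces $\leadsto$ as the largest set closed under the three clauses and immediately states that Tarski's coinduction principle induces the lemma, which is precisely your argument of packaging the clauses into a monotone operator $F$, identifying $\leadsto$ with $\nu F$, and reading the hypothesis as $R \subseteq F(R)$. No discrepancy with the paper's (implicit) proof.
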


\begin{definition}[Validity]
  \label{def:valid}
 A formula
$\varphi \in \Phi_{\calS}$ is \emph{valid} over $\calS$, denoted by $\calS \models \varphi$, whenever for all $\tau \in \Paths_{\calS}$ such that $(\lhs\,  \varphi) \, (\hd\, \tau)$ holds, it also holds that $\tau \leadsto_{\calS} (\rhs\, \varphi)$.
\end{definition}

\begin{example}
  \label{ex:valid}
 The formula $(c = c_0)\reach (c =c_2 \wedge s =  m\times (m+1)/2)$ is valid over the transition system denoted by the state-machine depicted in Figure~\ref{fig:sum}. Intuitively, this means that all finite paths ``starting'' in the control node $c_0$ ``eventually reach'' $c_2$ with $s = m\times (m+1)/2$ holding. The ``eventually reach'' expression justifies the $\reach$ notation borrowed from Linear Temporal Logic (LTL). Indeed, reachability formulas are essentially  LTL formulas for a certain version of LTL interpreted over finite paths. 

\end{example}  

We close the section with a simple notion of \emph{component} of  a transition system, and show that, if a formula is valid on a component,
then it is valid on the whole transition system. 

\begin{definition}[Component]
  \label{def:comp}
 A transition system $(S', \to')$ is a \emph{component} of  $(S, \to)$  if 
 \begin{itemize}
  \item $S' \subseteq S$ and $\to' \; \subseteq \; \to$;
 \item  for all $s',s \in S'$,  $s' \to s$ implies $s' \to' s$; %\david{You mean $s' \to' s$, right?}
  \item  for all  $s' \in S'$,  $s \in S\setminus S'$,   $s' \to s$ implies  $s' \in \final_{\calS'}$. 
 \end{itemize}
 We  write   $\calS' \leftslice \calS$ when $\calS'$ is a component of  $\calS$.  
 \end{definition} 
That is,  $\calS'$ is a full sub-transition system of $\calS$,
and  one  may only ``exit''   from  $\calS'$  via its final states.  %\david{Parse error}
We  often interchangeably use sets of states and their characteristic predicates, like we did for $\final_{\calS'}$ above.

\begin{theorem}[Compositionality of $\models$ w.r.t\ transition systems]
  \label{th:tscomp} %\david{I would make it a Proposition, if not a Theorem. It does not look like a Lemma.}
   $\calS' \leftslice \calS$  and $\calS' \models \varphi$ imply $\calS \models \varphi$.
\end{theorem}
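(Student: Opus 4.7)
The plan is to fix an arbitrary $\tau \in \Paths_\calS$ satisfying $(\lhs\,\varphi)(\hd\,\tau)$ and derive $\tau \leadsto_\calS (\rhs\,\varphi)$ by case analysis on whether $\tau$ remains in $S'$ forever or eventually leaves it. I will implicitly assume that $\hd\,\tau \in S'$; this is customary when $\lhs\,\varphi$ is the left-hand predicate of a formula supported in $\calS'$, and without such an assumption the statement would fail for paths that begin outside $S'$.

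Two auxiliary observations will drive the argument. First, any $\tau \in \Paths_\calS$ all of whose states lie in $S'$ is also in $\Paths_{\calS'}$; this is proved by coinduction on $\Paths_{\calS'}$, using condition (2) of Definition~\ref{def:comp} to lift each transition $s_i \to s_{i+1}$ of $\tau$ to $s_i \to' s_{i+1}$. Second, on paths that belong to both transition systems, $\leadsto_\calS$ and $\leadsto_{\calS'}$ coincide, because Definition~\ref{def:leadsto} inspects only the shape of the path and never the underlying transition relation; formally this amounts to two applications of Lemma~\ref{lem:coindleadsto}, one in each direction.

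With these tools the two cases become routine. If $\tau$ stays in $S'$ throughout, the first observation places $\tau$ in $\Paths_{\calS'}$, $\calS' \models \varphi$ gives $\tau \leadsto_{\calS'} (\rhs\,\varphi)$, and the second observation transports this to $\tau \leadsto_\calS (\rhs\,\varphi)$. Otherwise, let $k$ be the least index with $s_k \in S'$ and $s_{k+1} \notin S'$; by condition (3) of Definition~\ref{def:comp} the state $s_k$ is final in $\calS'$, so the prefix $\tau_k = s_0\, s_1 \cdots s_k$ is a finite path of $\calS'$. Applying $\calS' \models \varphi$ to $\tau_k$ yields $\tau_k \leadsto_{\calS'} (\rhs\,\varphi)$, and a straightforward induction on the length of $\tau_k$ against Definition~\ref{def:leadsto} extracts some $i \leq k$ with $(\rhs\,\varphi)(s_i)$; a dual induction on $i$ using clauses (ii) and (iii) of that definition then lifts this to $\tau \leadsto_\calS (\rhs\,\varphi)$.

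The main obstacle is the second auxiliary observation, namely the indifference of $\leadsto$ to the transition relation on commonly-supported paths. It is intuitively clear from inspecting Definition~\ref{def:leadsto}, but a clean formal argument requires a careful double use of Lemma~\ref{lem:coindleadsto}, in which the witnessing relation is transported explicitly between the two systems so that its elements remain well-typed as paths in both. The remaining steps are largely bookkeeping: locating the first exit point from $S'$ and extracting the relevant finite prefix.
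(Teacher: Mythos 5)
Your proof is correct and follows the natural argument for this theorem: split on whether the path stays in $S'$ forever or eventually exits, lift the relevant path (or its prefix up to the first exit, which is final in $\calS'$ by the third condition of Definition~\ref{def:comp}) to a path of $\calS'$, and transport $\leadsto$ between the two systems via Lemma~\ref{lem:coindleadsto}, extracting and re-injecting the witnessing position by finite induction in the exit case. Your explicit flagging of the implicit convention that $\lhs\,\varphi$ is supported on $S'$ (so that paths starting outside $S'$ satisfy the claim vacuously) is the right reading of the statement, without which it would indeed be false.
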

\begin{example}
  \label{ex:tscomp}
  In Figure~\ref{fig:sum}, the self-loop on the control node $c_1$
  %\david{You mean $c_1$, right?}
  denotes a transition system $\calS'$ that is a component of the transition system $\calS$ denoted by the whole state machine. Let $\varphi \eqbydef (c = c_1 \wedge i = 0 \wedge s = 0) \reach (c = c_1 \wedge i = m \wedge s = i\times(i+1)/2 )$.
  One can show  that
  $\calS' \models \varphi$, thus, $\varphi$  is also valid over $\calS$.
\end{example}  
One could, in principle, prove the validity of reachability formulas directly from the semantical definitions. However, this has several disadvantages: lack of a methodology - each formula is proved in its own ad-hoc way, and lack of a notion of completeness - is there a uniform way for  proving every valid formula?  These issues are addressed by the  proof systems presented by increasing order of complexity in the next sections.

\section{A One-Rule Proof System}
\label{sec:one-rule}

\begin{figure}[t]
\small
 \begin{align*} 
\textsf{[Stp]~} 
&
\dfrac{\calS \vdash \post l' \reach r}
      {\calS \vdash l \reach r}\nu \ \mathit{\ if \ } l \sqsubseteq l' \sqcup r,  l' \sqcap \final \sqsubseteq \bot
 \end{align*}
 \caption{\label{fig:pf1} One-rule proof system.}
 \vspace*{-3mm}
\end{figure}

Our first proof system is depicted as  the one-rule  inference system in Figure~\ref{fig:pf1}. It is parameterised by a transition system $\calS$, and everything therein depends on it; we omit  $_\calS$ subscripts for simplicity.  Intuitively, an application of the \textsf{[Stp]} rule can be seen as a symbolic execution step, taking a formula $l \reach r$ and ``moving'' $l$ ``one step closer'' to $r$ - specifically, taking an over-approximation $l'$ of the ``difference'' between $l$ and $r$ (encoded in the side-condition $l \sqsubseteq l' \sqcup r$) that contains no final states ($l' \sqcap \final \sqsubseteq \bot$) and performing a symbolic execution step from $l'$ (encoded in the $\partial$ symbolic transition function). The rule is applicable infinitely many times, hence the $\nu$ symbol next to~it.
Note that there are no hypotheses in the proof system: those would be reachability formulas in the left-hand side of the $\vdash$ symbol, not allowed here.

For a more formal definition, consider the function $F : \calP(\Phi) \to \calP(\Phi)$ defined by

$$F(X) =  \bigcup_{l,l'r\in \Pi, \, l \sqsubseteq l' \sqcup r, l\, ' \sqcap \final \sqsubseteq \bot, \, \partial l' \rreach r \in X} \{  l \reach r \}$$   

\noindent  $F$ is monotone, and, by Knaster-Tarski's theorem, $F$ has a greatest fixpoint $\nu F$.
We now define $\calS\vdash \varphi$ 
by $\varphi \in \nu F$.
Tarski's  coinduction principle then  induces the following coinduction principle for~$\vdash$:

\begin{lemma}
  \label{lem:coindvdash1}
For all  set $X\subseteq \Phi$ of hypotheses and $\varphi \in X$, if \emph{for all $l \reach r  \in X $, there is $l' \in \Pi$  such that $l \sqsubseteq l' \sqcup r$, $ l' \sqcap \final \sqsubseteq \bot$ and  $\post l' \reach r\in X$},  then $\calS \vdash \varphi$.
\end{lemma}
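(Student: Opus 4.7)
The plan is to recognise this lemma as a direct specialisation of Tarski's coinduction principle to the functional $F$ that was used to define $\vdash$.

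First, I would unfold the definition: by construction, $\calS \vdash \varphi$ means $\varphi \in \nu F$, where $F : \calP(\Phi) \to \calP(\Phi)$ is the monotone operator
\[
F(X) \;=\; \bigcup_{l,l',r \in \Pi,\; l \sqsubseteq l' \sqcup r,\; l' \sqcap \final \sqsubseteq \bot,\; \post l' \reach r \in X} \{\, l \reach r \,\}.
\]
So the goal $\calS \vdash \varphi$ reduces to showing $\varphi \in \nu F$, and since $\varphi \in X$ by assumption, it suffices to prove the stronger inclusion $X \subseteq \nu F$.

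Next, I would invoke Tarski's coinduction principle on the complete lattice $(\calP(\Phi), \subseteq)$: it is enough to establish $X \subseteq F(X)$. So fix an arbitrary $\varphi \in X$. Since $\varphi$ is a reachability formula, write $\varphi = l \reach r$. The hypothesis of the lemma provides $l' \in \Pi$ satisfying the three conditions $l \sqsubseteq l' \sqcup r$, $l' \sqcap \final \sqsubseteq \bot$, and $\post l' \reach r \in X$. These three conditions are exactly the indexing conditions in the union defining $F(X)$, hence $l \reach r \in F(X)$. This gives $X \subseteq F(X)$, so by Tarski $X \subseteq \nu F$, and in particular $\varphi \in \nu F$, i.e.\ $\calS \vdash \varphi$.

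There is really no hard step here: the lemma is essentially a repackaging of the coinduction principle in a form convenient for applications, where the user supplies a candidate invariant $X$ (a set of ``coinductive hypotheses'') rather than reasoning directly about $\nu F$. The only thing worth stating carefully is the correspondence between the witness $l'$ produced by the hypothesis and the index set of the union in the definition of $F$; once that correspondence is written down, membership of $l \reach r$ in $F(X)$ is immediate.
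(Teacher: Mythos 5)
Your proof is correct and matches the paper's approach exactly: the lemma is stated there as the instance of Tarski's coinduction principle obtained from the functional $F$ defining $\vdash$, and your verification that the lemma's hypothesis is precisely the condition $X \subseteq F(X)$ is the whole content of the argument.
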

%% We call any set  $X$ satisfying the hypothesis \emph{for all $l \reach r  \in X $, there is $l' \in \Pi$  such that $l \sqsubseteq l' \sqcup r$, $ l' \sqcap \final \sqsubseteq \bot$ and  $\post l' \reach r\in X$} above, an external coinduction hypothesis for $\vdash$.

\paragraph*{Soundness.} Soundness means that only valid formulas are proved:
\begin{theorem}[Soundness of $\vdash$]
  \label{th:soundness}
  $\calS \vdash \varphi$ implies $\calS \models \varphi$.
\end{theorem}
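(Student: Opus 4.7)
The plan is to prove soundness by a coinductive argument on paths, using Lemma~\ref{lem:coindleadsto}. Fix $\calS$ and unfold $\models$ according to Definition~\ref{def:valid}: I must show that whenever $\calS \vdash l \reach r$ and $\tau \in \Paths$ with $l\,(\hd\,\tau)$, then $\tau \leadsto r$. The natural candidate for a coinductive invariant is
\[
R \;\eqbydef\; \{\,(\tau, r) \in \Paths \times \Pi \mid \exists l.\ \calS \vdash l \reach r \ \wedge\ l\,(\hd\,\tau)\,\}.
\]
By Lemma~\ref{lem:coindleadsto}, it suffices to check that every $(\tau,r) \in R$ satisfies one of the three disjuncts (i)--(iii). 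Once $R \subseteq\;\leadsto$ is established, the theorem follows by choosing, for any $\varphi = l \reach r$ with $\calS \vdash \varphi$ and any path $\tau$ with $l\,(\hd\,\tau)$, the pair $(\tau, r) \in R$.

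The key step is unfolding $\vdash$. Because $\vdash$ is defined as $\nu F$ and $\nu F = F(\nu F)$, the hypothesis $\calS \vdash l \reach r$ yields some $l' \in \Pi$ with $l \sqsubseteq l' \sqcup r$, $l' \sqcap \final \sqsubseteq \bot$, and $\calS \vdash \post l' \reach r$. Let $s \eqbydef \hd\,\tau$. From $l\,s$ and $l \sqsubseteq l' \sqcup r$ we get either $r\,s$ or $l'\,s$. If $r\,s$, then case (i) applies when $\tau = s$ and case (ii) applies when $\tau = s\,\tau'$. If $l'\,s$, then $s$ is non-final by the second side condition, so $\tau$ cannot be a one-state path (single-state paths being exactly the final states, by the coinductive definition of $\Paths$); hence $\tau = s\,\tau'$ with $s \to \hd\,\tau'$. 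Taking $s' \eqbydef \hd\,\tau'$, the symbolic transition gives $(\post l')\,s'$ witnessed by $s$, and together with $\calS \vdash \post l' \reach r$ this places $(\tau', r) \in R$, so case (iii) applies.

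The main obstacle, modest but essential, is the interplay between the two side conditions and the coinductive shape of paths: one must notice that the non-finality of states in $l'$ is exactly what forbids the pathological one-state-path scenario in the "progress" case, thereby licensing the use of the symbolic transition function $\post$ to produce the next hypothesis $\post l' \reach r$. The rest is a mechanical case analysis. Finally, the construction of $R$ makes no use of induction on derivations (there are none, since $\vdash$ is purely coinductive); the entire argument is carried by a single application of the $\leadsto$-coinduction principle.
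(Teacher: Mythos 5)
Your proposal is correct and matches the paper's proof exactly: the paper also instantiates the coinduction principle of Lemma~\ref{lem:coindleadsto} with the relation $R \eqbydef \lambda(\tau,r).\ \exists l.\ (\calS \vdash l \reach r \wedge l\,(\hd\,\tau))$. The case analysis you spell out (unfolding $\nu F = F(\nu F)$, splitting on $l \sqsubseteq l' \sqcup r$, and using $l' \sqcap \final \sqsubseteq \bot$ to rule out the one-state path in the progress case) is precisely the detail the paper leaves implicit.
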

The proof uses the coinduction principle of the $\leadsto$ relation (Lemma~\ref{lem:coindleadsto}), which occurs in the definition of validity, instantiated with the relation $R \subseteq \Paths \times \Pi$ defined by $R \;\eqbydef\; \lambda (\tau,r). \exists l.( \calS\vdash l \reach r \wedge l \,(\hd \tau))$. As a general observation, all proofs by coinduction use a specific coinduction principle instantiated with a specific predicate/relation. The instantiation step is where the user's creativity is most involved.
  
\paragraph*{Completeness.}
Completeness is the reciprocal to soundness: any valid formula is provable. It is based on the following lemma, which essentially reduces reachability to a form of inductive invariance.
%\david{I cannot parse the previous sentence.}

\begin{lemma}
  \label{lem:redtoinv}
If $l \sqsubseteq q \sqcup r$, $q \sqcap \final \sqsubseteq \bot$, and $\post q \sqsubseteq q \sqcup r$ then $\calS \vdash l \reach r$.
%\david{You should state that its proof uses Lemma~\ref{lem:coindvdash1}, or else one do not see the point of stating Lemma~\ref{lem:coindvdash1}.}
\end{lemma}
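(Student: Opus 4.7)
The plan is to apply the coinduction principle for $\vdash$ (Lemma~\ref{lem:coindvdash1}): it suffices to exhibit a set $X \subseteq \Phi$ containing $l \reach r$ such that, for every $l_0 \reach r_0 \in X$, one can find $l_0' \in \Pi$ with $l_0 \sqsubseteq l_0' \sqcup r_0$, $l_0' \sqcap \final \sqsubseteq \bot$, and $\post l_0' \reach r_0 \in X$. The three hypotheses of the lemma are tailored to make the choice of $X$ evident: the predicate $q$ acts as an inductive invariant that, once entered, is either preserved or already satisfies $r$, and $q$ avoids final states.

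Concretely, I would take
\[
X \;\eqbydef\; \{\, l_0 \reach r \;\mid\; l_0 \sqsubseteq q \sqcup r \,\}.
\]
The membership $l \reach r \in X$ is immediate from the first hypothesis $l \sqsubseteq q \sqcup r$. For the coinductive step, given any $l_0 \reach r \in X$, I would pick $l_0' \eqbydef q$. The side-condition $l_0 \sqsubseteq l_0' \sqcup r$ is precisely $l_0 \sqsubseteq q \sqcup r$, which is the defining property of $X$; the side-condition $l_0' \sqcap \final \sqsubseteq \bot$ is the second hypothesis; and $\post l_0' \reach r = \post q \reach r$ lies in $X$ exactly because the third hypothesis gives $\post q \sqsubseteq q \sqcup r$. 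Applying Lemma~\ref{lem:coindvdash1} with this $X$ yields $\calS \vdash l_0 \reach r$ for every $l_0 \reach r \in X$, and in particular $\calS \vdash l \reach r$.

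I do not expect serious obstacles: the only creative step is identifying the right $X$, and the shape of the three hypotheses dictates the answer almost immediately, since $q \sqcup r$ is visibly an over-approximation of the reachable states preserved by $\post$ until $r$ is reached. The main thing to be careful about is to keep the same right-hand side $r$ across all formulas in $X$ — the coinductive argument does not need to change $r$, only to shrink the left-hand side toward $r$ via successive applications of $\post$ under the invariant $q \sqcup r$.
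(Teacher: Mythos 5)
Your proof is correct and follows essentially the same route as the paper, which also establishes this lemma by applying the coinduction principle of Lemma~\ref{lem:coindvdash1} to a suitably chosen set $X$ closed under the step $l_0 \mapsto \post q$ with the fixed right-hand side $r$. Your instantiation $X = \{\, l_0 \reach r \mid l_0 \sqsubseteq q \sqcup r \,\}$ with witness $l_0' = q$ discharges all three side-conditions directly from the three hypotheses, exactly as intended.
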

The proof of this lemma uses Lemma~\ref{lem:coindvdash1} with an appropriate instantion of the set $X$ therein.
\begin{example}
  \label{ex:redtoinv}
  In order to establish $\calS' \models (c = c_1 \wedge i = 0 \wedge s = 0) \reach (c = c_1 \wedge i = m \wedge s = i\times(i+1)/2 )$ - which has been claimed in~Example~\ref{ex:tscomp} - one can use Lemma~\ref{lem:redtoinv} with $q \; \eqbydef \;  (c = c_1 \wedge i < m \wedge s = i \times (i+1)/2) $.
\end{example}  

 \begin{theorem}[Completeness of $\vdash$]
   \label{th:completeness}
   $\calS \models \varphi$ implies $\calS \vdash \varphi$.
 \end{theorem}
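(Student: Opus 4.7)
The plan is to reduce $\calS \models \varphi$ to the hypotheses of Lemma~\ref{lem:redtoinv}, which then yields $\calS \vdash \varphi$. Write $\varphi$ as $l \reach r$ and define the ``frontier'' predicate
\[ q \eqbydef \lambda s \,.\, \neg(r\, s) \wedge (\forall \tau \in \Paths \,.\, \hd\, \tau = s \Rightarrow \tau \leadsto r), \]
which captures the states that are not yet in $r$ but from which every path does eventually reach $r$. This is the canonical semantic invariant associated with $l \reach r$, and the right candidate to plug into Lemma~\ref{lem:redtoinv}.

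With $q$ in hand, the three side-conditions of Lemma~\ref{lem:redtoinv} should all follow from straightforward case analyses on Definition~\ref{def:leadsto}. For $l \sqsubseteq q \sqcup r$, I would unpack $\calS \models l \reach r$ via Definition~\ref{def:valid}: any $s$ with $l\, s$ has every path from it satisfying $\leadsto r$, so $s$ is either in $r$ or in $q$. For $q \sqcap \final \sqsubseteq \bot$, if $\final\, s$ then the singleton $s$ is itself a path with $\hd\, s = s$, so $q\, s$ would force $s \leadsto r$; being a length-one path, only clause (i) of Definition~\ref{def:leadsto} can apply, yielding $r\, s$ and contradicting the $\neg(r\, s)$ conjunct of $q\, s$. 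For $\post q \sqsubseteq q \sqcup r$, pick $s'$ with $q\, s'$ and $s' \to s$; for any path $\tau$ starting at $s$, the prepended path $s'\, \tau$ starts at $s'$, so $q\, s'$ gives $s'\, \tau \leadsto r$. Clause (ii) of Definition~\ref{def:leadsto} is ruled out by $\neg(r\, s')$ and clause (i) by the fact that $s'\, \tau$ is not a singleton, leaving clause (iii), which yields $\tau \leadsto r$. Hence every path from $s$ reaches $r$, so $s$ lies in $r \sqcup q$.

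Once the three conditions are established, Lemma~\ref{lem:redtoinv} delivers $\calS \vdash l \reach r$, as desired.

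The main obstacle I anticipate is designing $q$ so that the step condition $\post q \sqsubseteq q \sqcup r$ is clean: the universally-quantified ``every path reaches $r$'' conjunct is what propagates along successors via clause (iii) of Definition~\ref{def:leadsto}, while the $\neg(r\, s)$ conjunct is what lets one rule out clause (ii) when transporting along an edge $s' \to s$. With both conjuncts present, the exclusion of final states in condition (2) comes out ``for free'' rather than needing to be built into $q$. Everything else is routine bookkeeping, provided one is careful that paths from an arbitrary state always exist (the singleton for final states, extensions via successors otherwise).
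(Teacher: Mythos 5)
Your proposal is correct and takes essentially the same route as the paper: its completeness proof also instantiates Lemma~\ref{lem:redtoinv} with exactly the predicate $q \eqbydef \lambda s.\,\neg r\,s \wedge \forall \tau \in \Paths.(s = \hd\,\tau \Rightarrow \tau \leadsto r)$ and verifies the three inclusions for valid $l \reach r$. Your case analyses on Definition~\ref{def:leadsto} for the three side-conditions are sound, so nothing is missing.
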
  
 The  proof of completeness is constructive: it   uses the predicate  $q \;\eqbydef \; \lambda s.\,\neg r s \wedge \forall \tau \in \Paths.(\,s = \hd\, \tau$ $\Rightarrow$ $\tau \leadsto r)$  that, for valid formulas $l \reach r$, is shown to satisfy the three inclusions of Lemma~\ref{lem:redtoinv}.
 One may wonder: even when one does not know whether a  formula $l \reach r$ is valid, can one still use the above-defined $q$ and Lemma~\ref{lem:redtoinv}
 in order to prove it? The answer is negative:  proving the first implication $l \sqsubseteq q \sqcup r$ with the above-defined $q$
 amounts to proving validity directly from the semantics of formulas, thus losing any benefit of having   a proof system.
 Hence, completeness is  a theoretical property; the practically useful property is  Lemma~\ref{lem:redtoinv},  which users have to provide with a suitable $q$ that
 satisfy the three inclusions therein. In~\cite{rusu:hal-01962912} we use this lemmma for verifying  an infinite-state transition-system specification of a  hypervisor.
%% \david{This sentence is odd.}

 \smallskip
 
 Looking back at the  proof system $\vdash$, we note that it is purely coinductive  -  no induction  is present at all.
 This is unlike the proof systems in forthcoming sections. Regarding compositionality (with respect to transition systems)
 our proof system has it, since, by soundness and completeness and Theorem~\ref{th:tscomp}, one has that $\calS' \leftslice \calS$ and $\calS' \vdash \varphi$ implies $\calS\vdash \varphi$.  However, we show below that $\vdash$ does not have another, equally  desirable compositionality feature:~\emph{asymmetrical compositionality} with respect to  formulas.
 
 \paragraph{Asymmetrical compositionality with respect to formulas.}
A proof system with this feature 
 decomposes a proof of a formula $\varphi$ into a proof of a formula $\varphi'$ and one of $\varphi$ assuming~$\varphi'$. The asymmetry between the formulas involed suggested the property's name.
 In Definition~\ref{def:weakcomp} below, $\VDash$ is a binary relation - a subset of $\calP(\Phi) \times \Phi$
(equivalently,  a predicate of type $\calP(\Phi) \to \Phi \to \mathbb{B}$), parameterised by a transition system $\calS$.
For \emph{hypotheses} $\calH \subseteq \Phi$ and $\varphi \in \Phi$, we write $\calS\!, \calH \VDash \phi$ for $(\calH,\phi)  \in \; \VDash$ and $\calS \VDash \phi$ for
$\calS\!, \emptyset \VDash \phi$.

\begin{definition}[Asymmetrical compositionality with respect to formulas]
  \label{def:weakcomp}
  A  proof system $\VDash$ is asymmetrically compositional with respect to formulas  if
  $\calS \!\VDash\varphi'$ and $\calS\!, \{\varphi'\} \VDash \varphi $ imply $\calS\! \VDash \varphi$.
\end{definition}
The proof system $\vdash$ is not asymmetrically compositional w.r.t.\ formulas, because that requires hypotheses,
which $\vdash$  does not have. One could add hypotheses to it, and a new rule to prove a formula if it is found among the hypotheses.
However,  note that, unlike the \textsf{[Stp]} rule, the new  rule has an inductive nature: it can only  occur a finite number of times in a $\vdash$ proof (specifically, at most once, at the end of a finite proof).

\section{An Asymmetrically-Compositional Proof System}

In this section we propose another proof system $\Vdash$ and show that it is  compositional with respect  to transition systems and asymmetrically compositional with respect to  formulas. These gains are achieved
thanks to a the introduction of inductive rules in the proof system, enabling a better distribution of roles between these rules and the remaining coinductive rule;  all at the cost of a more involved soundness proof.
%The proof system $\Vdash$ is also easier to use: it has several rules,  and each of them plays a distinct role.

\begin{figure}[t]
\small
\begin{align*}
 \textsf{[Hyp]~} 
&
\dfrac{}
      {\calS\!,\calH \Vdash \varphi} \mu\ \mathit{\ if \ } \varphi \in \calH\\
\textsf{[Trv]~} 
&
\dfrac{}
      {\calS\!,\calH \Vdash r \reach r}\mu\\
\textsf{[Str]~} 
&
\dfrac{\calS\!,\calH \Vdash l' \reach r}
      {\calS\!,\calH \Vdash l \reach r}\mu \ \mathit{\ if \ } l \sqsubseteq l'\\
\textsf{[Spl]~}       
&
\dfrac{\calS\!,\calH \Vdash l_1 \reach r \qquad \calS\!,\calH \Vdash l_2 \reach r}
      {\calS\!,\calH \Vdash (l_1 \sqcup l_2) \reach r} \mu\\
\textsf{[Tra]~}       
&
\dfrac{\calS\!\models  l \reach m \qquad \calS\!,\calH \Vdash m \reach r}
      {\calS\!,\calH \Vdash l \reach r} \mu \\                
\textsf{[Stp]~} 
&
\dfrac{\calS\!,\calH  \Vdash \post l \reach r}
      {\calS\!,\calH \Vdash l \reach r}\nu \mathit{\ if \ }  l \sqcap \final \sqsubseteq \bot
 \end{align*}
   \caption{\label{fig:pf2} Mixed inductive-coinductive proof system.} 
\end{figure}
Our second proof system is depicted in Figure~\ref{fig:pf2}. It is a binary relation - a subset of
$\calP(\Phi) \times \Phi$ (or equivalently, a binary predicate of type $\calP(\Phi) \to \Phi \to \mathbb{B}$),  parameterised by a transition system $\calS$. 
Intuitively, the rule  \textsf{[Stp]}, labelled with $\nu$, is coinductive, i.e., it can be applied infinitely many times, and
the rules \textsf{[Hyp]}, \textsf{[Trv]}, \textsf{[Str]}, \textsf{[Spl]}, and \textsf{[Tra]}, labelled by $\mu$ are inductive, i.e., they can only be applied finitely many times between two consecutive applications of \textsf{[Stp]}. Stated differently, a proof in $\Vdash$ is a possibly infinite series of \emph{phases}, and in each phase there are finitely many applications of \textsf{[Hyp]}, \textsf{[Trv]}, \textsf{[Str]}, \textsf{[Spl]}, and \textsf{[Tra]} and,  except in the last phase (if such a last phase exists), one application of \textsf{[Stp]}.

Note that making the inductive rules coinductive would compromise soundness, because, e.g., the
\textsf{[Str]} rule could forever reduce a proof of any formula to itself, thus proving any formula, valid or not.

The roles of the rules are the following ones.  \textsf{[Hyp]} allows one to prove a formula if it is among the hypotheses.   \textsf{[Trv]}  is in charge of proving trivially valid formulas. \textsf{[Str]} is a general principle that amounts to strengthening a formula before proving it. \textsf{[Spl]} is used for getting rid of disjunctions in left-hand sides of formulas,
which occur when several, alternative symbolic behaviours are explored in a proof search.
 \textsf{[Tra]} is a transitivity rule, used for proving facts about sequential symbolic behaviour.
 Note also the asymmetry in hypotheses of the rule \textsf{[Tra]}: for one formula validity is required, while for the other one, it is provability. This asymmetry is used to avoid technical difficulties that arise when proving the soundness of~$\Vdash$, but, as we shall see, it generates difficulties of its own.
Finally, \textsf{[Stp]} makes the  connection between the concrete paths  and the symbolic ones, which the proof system explores during proof search.

For a formal definition: consider the following functions from $\calP(\Phi)$ to $\calP(\Phi)$ defined by

\begin{itemize}
 \item $\vdash_{{\cal S,H},\, Y}^{\mathsf{[Hyp]}}(X) = \calH$
  
\item $\vdash_{{\cal S,H},\, Y}^{\mathsf{[Trv]}}(X) = \bigcup_{r \in \Pi} \{r \reach r\}$

\item  $\vdash_{{\cal S, H},\, Y}^{\mathsf{[Str]}}(X) = \bigcup_{l,l',r \in \Pi, \, l' \rreach r \in X, \,  l \sqsubseteq l'} \{l \reach r\}$
 \item $\vdash_{{\cal S, H},\, Y}^{\mathsf{[Spl]}}(X) = \bigcup_{l_1, l_2, r\, \in \Pi,\,  \{l_1 \rreach r,\, l_2 \rreach r\} \subseteq X} \{(l_1 \sqcup l_2) \reach r\}$
 \item  $\vdash_{{\cal S, H},\, Y}^{\mathsf{[Tra]}}(X) = \bigcup_{l,r,m \in \Pi, \, \calS \models l \rreach m, \; m\rreach r \in X} \{l \reach r\}$

\item $\vdash_{{\cal S, H},\, Y}^{\mathsf{[Stp]}}(X) =  \bigcup_{l,r\in \Pi, \, l \sqcap \final \sqsubseteq \bot, \, \partial l \rreach r \in Y} \{  l \reach r \}$    
\end{itemize}  
Let  $\vdash_{{\cal S, H}, Y}\!(X) \!= \vdash_{{\cal S, H}, Y}^{\mathsf{[Hyp]}}\!(X) \cup\!\vdash_{{\cal S, H}, Y}^{\mathsf{[Trv]}}\!(X)  \cup\!\vdash_{{\cal S, H}, Y}^{\mathsf{[Str]}}\!(X)   \cup\!\vdash_{{\cal S, H}, Y}^{\mathsf{[Spl]}}\!(X)   \cup\!\vdash_{{\cal S, H}, Y}^{\mathsf{[Tra]}}\!(X)  \cup\!\vdash_{{\cal S, H}, Y}^{\mathsf{[Stp]}}\!(X)$.
 It is not hard to show that $\vdash_{{\cal S,H},\, Y} : \calP(\Phi) \to \calP(\Phi)$ is continuous, thus, by the Knaster Tarski and Kleene fixpoint theorems
 it has a smallest fixpoint $\mu\!\vdash_{{\cal S, H},\, Y} =\bigcup_{n=1}^{\infty} \vdash_{{\cal S, H},\, Y}^n(\emptyset)$.
 % Moreover, for all $Y$, $\calH \subseteq \mu\!\vdash_{{\cal H},\, Y}$.
Now,  we define the function $F_{{\cal S, H}} : \calP(\Phi) \to \calP(\Phi)$ by
$F_{{\cal S, H}}(Y) = \mu\!\vdash_{{\cal S, H}, \, Y}$.   $F_{{\cal S, H}}$ is monotone, thus,  it has a greatest fixpoint
$\nu F_{{\cal S, H}} = \nu (\lambda Y.\mu\!\vdash_{{\cal S, H}, \, Y})$ $ = \nu\mu\!\vdash_{\cal S, H}$.

We  define the proof system $\Vdash$ as follows : for all $\calH  \subseteq \Phi$ and $\varphi \in \Phi$, 
$\calS\!,\calH \Vdash\phi $ iff  $\varphi \in \nu\mu\!\vdash_{\cal S, H}$.
The  inductive-coiductive nature of  $\Vdash$ is visible from its definition. It admits the following coinduction principle:

\begin{lemma}
  \label{def:coindps2}
If  $ X \subseteq\mu\!\vdash_{{\cal S, H}, \ X}$
then for all
$ \varphi \in X$ it holds that  $\calS\!, \calH \Vdash\varphi$.
\end{lemma}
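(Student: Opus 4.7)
The plan is to unfold the definition of $\Vdash$ and then apply Tarski's coinduction principle directly on the complete lattice $(\calP(\Phi),\subseteq)$. By construction, $\calS\!,\calH \Vdash \varphi$ holds iff $\varphi \in \nu\mu\!\vdash_{\cal S, H} = \nu F_{\cal S, H}$, where $F_{\cal S, H}(Y) = \mu\!\vdash_{\cal S, H, Y}$. Thus the goal $\forall \varphi \in X.\, \calS\!, \calH \Vdash \varphi$ is just the inclusion $X \subseteq \nu F_{\cal S, H}$.

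First I would rewrite the hypothesis $X \subseteq \mu\!\vdash_{\cal S, H, X}$ as $X \subseteq F_{\cal S, H}(X)$, using the very definition of $F_{\cal S, H}$. This is exactly the premise required by Tarski's coinduction principle stated in the preliminaries ($x \sqsubseteq F(x)$ implies $x \sqsubseteq \nu F$), once we have checked that $F_{\cal S, H}$ is monotone on $\calP(\Phi)$. That monotonicity is already noted in the setup preceding the lemma (it is what justifies the existence of $\nu F_{\cal S, H}$), so it can be invoked without further proof.

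Applying Tarski's principle then yields $X \subseteq \nu F_{\cal S, H} = \nu\mu\!\vdash_{\cal S, H}$, which directly gives $\calS\!, \calH \Vdash \varphi$ for every $\varphi \in X$. No real obstacle is expected: the statement is essentially a repackaging of Tarski's coinduction for the particular monotone operator $F_{\cal S, H}$. The only mildly delicate point to make explicit is that the inner least-fixpoint construction $\mu\!\vdash_{\cal S, H, Y}$ (arising from the inductive rules) is harmless for the outer coinductive step: it only enters through the definition of $F_{\cal S, H}$, whose monotonicity in $Y$ is all that the coinduction principle needs.
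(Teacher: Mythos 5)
Your proposal is correct and follows exactly the route the paper intends: the lemma is the instance of Tarski's coinduction principle for the monotone operator $F_{\cal S, H}(Y) = \mu\!\vdash_{{\cal S, H},\, Y}$, whose greatest fixpoint defines $\Vdash$, so the hypothesis $X \subseteq \mu\!\vdash_{{\cal S, H},\, X} = F_{\cal S, H}(X)$ immediately yields $X \subseteq \nu F_{\cal S, H}$. Your remark that the inner least fixpoint matters only through the monotonicity of $F_{\cal S, H}$ in $Y$ is precisely the point the paper relies on.
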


\paragraph*{Using the coinduction principle.} For proving statements of the form $\calS\!, \calH \Vdash\varphi$, one can:
\begin{itemize}

 \item   find a sequence  $X = X_0, \cdots X_n = \emptyset$ of sets
   such that $X_i \; \subseteq  \; \vdash_{{\cal S, H}\, , X} (X_{i+1})$, for $i = 0, \ldots, n-1$, and $\varphi\in X$;
\item  since  
  $\mu\!\vdash_{{\cal S, H}, \, X } =\bigcup_{n=1}^{\infty} \vdash_{{\cal S, H}, \, X}^n(\emptyset)$, we obtain by induction on $n$
that   $X_i \subseteq \mu\!\vdash_{{\cal S, H}\, ,X}$ for  $i = 0, \ldots, n-1$ and in particular
  $X \subseteq  \mu\!\vdash_{{\cal S, H}\, , X }$. By Lemma~\ref{def:coindps2}, $\calS\!, \calH \Vdash\varphi$.
\end{itemize}  

We illustrate the  above approach by proving a key lemma for the completeness of $\Vdash$.

  \begin{lemma}
  \label{lem:redtoinv2}
If $l \sqsubseteq q \sqcup r$, $q \sqcap \final \sqsubseteq \bot$, and $\post q \sqsubseteq q \sqcup r$ then $\calS\Vdash l \reach r$.
\end{lemma}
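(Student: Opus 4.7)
The plan is to apply the coinduction principle of Lemma~\ref{def:coindps2} with the candidate set
$X \;\eqbydef\; \{\, l' \reach r \mid l' \sqsubseteq q \sqcup r \,\}$
and $\calH = \emptyset$. The first hypothesis $l \sqsubseteq q \sqcup r$ immediately gives $l \reach r \in X$, so it suffices to establish $X \subseteq \mu\!\vdash_{\calS,\emptyset,X}$ in order to conclude $\calS \Vdash l \reach r$. Following the recipe spelled out after Lemma~\ref{def:coindps2}, it is enough to exhibit, for each $\varphi \in X$, a finite derivation using the inductive rules whose applications of \textsf{[Stp]} have premises lying in $X$.

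First I would handle the three canonical shapes and then bootstrap to all of $X$. The formula $r \reach r$ is derived in one step by \textsf{[Trv]}. The formula $q \reach r$ is derived by \textsf{[Stp]} (whose side-condition $q \sqcap \final \sqsubseteq \bot$ is the second hypothesis); this reduces to $\post q \reach r$, and the third hypothesis $\post q \sqsubseteq q \sqcup r$ says exactly that $\post q \reach r \in X$, so the reduction is permitted. Next, $(q \sqcup r) \reach r$ follows by one application of \textsf{[Spl]} from $q \reach r$ and $r \reach r$. Finally, for an arbitrary $l' \reach r \in X$, the inclusion $l' \sqsubseteq q \sqcup r$ lets me apply \textsf{[Str]} to $(q \sqcup r) \reach r$ and obtain $l' \reach r$. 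Thus every element of $X$ is produced after at most three iterations of $\vdash_{\calS,\emptyset,X}$ starting from $\emptyset$, whence $X \subseteq \mu\!\vdash_{\calS,\emptyset,X}$, and Lemma~\ref{def:coindps2} delivers the conclusion.

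The main obstacle, as usual with coinductive proofs, lies in choosing the right invariant: $X$ must be closed under a single \textsf{[Stp]} step (modulo the finite inductive manipulations allowed by $\mu\!\vdash$). The set $\{l' \reach r : l' \sqsubseteq q \sqcup r\}$ is tailored precisely for this, since the invariance condition $\post q \sqsubseteq q \sqcup r$ exactly guarantees that stepping from the ``bad'' component $q$ lands back inside $q \sqcup r$, while the ``good'' component $r$ is discharged by \textsf{[Trv]}. The argument is a direct structural analogue of the proof of Lemma~\ref{lem:redtoinv}, but now leveraging \textsf{[Str]}, \textsf{[Spl]} and \textsf{[Trv]} to package the case analysis that was previously folded into a single coinductive rule.
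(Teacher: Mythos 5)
Your proof is correct and follows essentially the same route as the paper: both instantiate the coinduction principle of Lemma~\ref{def:coindps2} with $\calH=\emptyset$ and verify the required inclusion $X \subseteq \mu\!\vdash_{\calS,\emptyset,X}$ via finitely many Kleene iterates of the functional, using \textsf{[Trv]}, \textsf{[Stp]}, \textsf{[Spl]} and \textsf{[Str]} in exactly the same way. The only difference is cosmetic: the paper takes the finite witness set $\{l\reach r,\ q\reach r,\ \post q\reach r\}$ and exhibits an explicit descending chain of eight sets, whereas your up-closed candidate $\{l'\reach r \mid l' \sqsubseteq q\sqcup r\}$ lets every member be derived by one uniform three-step argument.
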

  \begin{proof}
    We apply the above  approach. Note that $\calH = \emptyset$.
   We choose  $X= X_0 = \{l \reach r, q \reach r, \post q \reach r\}$.
\begin{itemize}
\item Let
  $X_1  = \{(q \sqcup r)\reach r,  q \reach r, \post q \reach r\}$;  using the hypothesis $l \sqsubseteq q \sqcup r$,
  $X_0  \subseteq \vdash^{\mathsf{[Str]}}_{{\calS}\, , \emptyset, \, X}(X_1) \subseteq \vdash_{{\calS}\,, \emptyset, \, X}(X_1)$;

\item Let $X_2 =\{q \reach r, r\reach r, \post q \reach r\}$ ; we obtain
  $X_1  \subseteq  \vdash^{\mathsf{[Spl]}}_{{\calS},\emptyset,\, X} (X_2) \subseteq \vdash_{{\calS},\emptyset,\, X}(X_2)$;
    \item Let $X_3 = \{q \reach r\,, \post q \reach r\}$; we obtain $X_2 \subseteq \vdash^{\mathsf{[Trv]}}_{{\calS},\emptyset,\, X}(X_3) \subseteq \vdash_{{\calS},\emptyset,\, X}(X_3)$;
    \item Let $X_4 = \{\post q \reach r\}$; using  the second  hypothesis $q \sqcap \final \sqsubseteq \bot$ and  the fact that $\partial q \reach l \in  X $ we obtain
      $X_3 \subseteq \vdash^{\mathsf{[Stp]}}_{{\calS},\emptyset,\, X}(X_4) \subseteq\vdash_{{\calS},\emptyset,\, X} (X_4)$
      ;
    \item Let  $X_5 = \{(q \sqcup r) \reach r\}$ ; using the  hypothesis $\post q \sqsubseteq q \sqcup r$, we obtain
  $X_4 \subseteq \vdash^{\mathsf{[Str]}}_{{\calS}, \emptyset,\, X}(X_5) \subseteq \vdash_{{\calS}, \emptyset,\, X}(X_5)$;
    \item Let $X_6 = \{q \reach r, r\reach r\}$; we obtain $X_5 \subseteq \vdash^{\mathsf{[Spl]}}_{{\calS},\emptyset,\, X}(X_6) \subseteq \vdash_{{\calS},\emptyset,\, X}(X_6)$;

\item    Let $X_7 = \{q \reach r\}$; we obtain $X_6 \subseteq \vdash^{\mathsf{[Trv]}}_{{\calS},\emptyset,\, X}(X_7) \subseteq \vdash_{{\calS},\emptyset,\, X}(X_7)$;

\item  Let $X_8 = \emptyset$; using the second hypothesis $q \sqcap \final \sqsubseteq \bot$ and  the fact that $\partial q \reach l \in  X$, we obtain $X_7 \subseteq \vdash^{\mathsf{[Stp]}}_{{\calS},\emptyset,\, X}(X_8) \subseteq \vdash_{{\calS},\emptyset,\, X}(X_8)$.
\end{itemize}
Hence, by basic properties of inclusion, $X\subseteq \bigcup_{n = 0}^{7} \vdash^{n}_{{\calS},\emptyset,\, X} (\emptyset) \subseteq \bigcup_{n = 0}^{\infty} \vdash^{n}_{{\calS},\emptyset,\, X} (\emptyset) = \mu\!\vdash _{{\calS},\emptyset,\, X}$, and from $l \reach r \in X$ and Lemma~\ref{def:coindps2}  we obtain $\calS\vdash l \reach r$. 
\end{proof}

  \paragraph*{Soundness.}
  We define the recursive function $\suf :  \{\tau \in \Paths \} \to \{i : \mathbb{N} \mid i \leq (\len\,  \tau)\}  \to \Paths$ by 
  $\suf\,  \tau \, 0 = \tau$ and $ \suf (s\, \tau) (i + 1) = \suf \, \tau \, i$. Intuitively, $\suf \, \tau \, i$ is the sequence obtained by removing $i \leq (\len \, \tau)$ elements from the ``beginning'' of $\tau$.
  This  is required in the definition of the following relation and is used hereafter.

\begin{definition}%[$\hookrightarrow$ Relation]
  \label{def:hookright}
  $\hookrightarrow \subseteq \Paths \times \Pi$ is the largest set of pairs $(\tau, r)$ such that: (i) $\tau = s$ for some $ \in S$ such that $r\, s$;
  or (ii) $\tau = \ s\, \tau'$, for some $s \in S$, $\tau'\in \Paths$ such that  $r\, s$;  or (iii)  $\tau = \ s\, \tau'$ for some $s \in S$, $\tau'\in \Paths$  and $n \leq (\len\,  \tau')$
  such that $((\suf\, \tau' \, n),r) \in \hookrightarrow$.
\end{definition}
We write $\tau \hookrightarrow r$  instead of $(\tau,r) \in \hookrightarrow$. By analogy with Lemma~\ref{lem:coindleadsto} (coinduction principle for the  $\leadsto$ relation), but using Tarski's strong induction principle, we obtain: 

\begin{lemma} 
\label{lem:coindhookright}
 Let $R\!\subseteq\!Paths\!\times\!\Pi$ be s.t.\  $(\tau, r) \in R\!\Rightarrow\!(\exists s. \tau\!=\!s\!\wedge\!r \, s)\!\vee\!(\exists s.\exists \tau'.\tau = s\, \tau' \wedge r \, s) \vee (\exists s.\exists \tau' .\exists n.\exists \tau''. \tau = s\, \tau'
  \wedge \tau'' =  (\suf\,  \tau'\, n) \wedge (\tau'',r) \in R \vee \tau''\hookrightarrow r)) $.  
     Then $R \subseteq\,  \hookrightarrow$.
\end{lemma}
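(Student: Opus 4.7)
The plan is to view $\hookrightarrow$ as the greatest fixpoint $\nu G$ of the monotone operator
$G : \mathcal{P}(\Paths \times \Pi) \to \mathcal{P}(\Paths \times \Pi)$ defined by
$$G(X) \;=\; \{(s,r) \mid r\, s\} \,\cup\, \{(s\,\tau', r) \mid r\, s\} \,\cup\, \{(s\,\tau', r) \mid \exists n \leq \len\,\tau'.\, ((\suf\,\tau'\,n), r) \in X\},$$
which is simply the set-theoretic rephrasing of Definition~\ref{def:hookright}. Monotonicity of $G$ is immediate since $X$ only appears in a single positive occurrence. Then, rather than using the basic Tarski coinduction principle (which would require $R \subseteq G(R)$), I would invoke the \emph{strong} form mentioned in the preliminaries, namely $R \subseteq G(R \cup \nu G)$ implies $R \subseteq \nu G$. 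This is exactly the shape of the hypothesis, since the third disjunct allows $\tau''$ to already belong to $\hookrightarrow = \nu G$.

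The core of the proof is then to verify that $R \subseteq G(R \;\cup \hookrightarrow)$, by picking an arbitrary $(\tau, r) \in R$ and doing a case analysis on the three disjuncts provided by the hypothesis. In the first case, $\tau = s$ with $r\, s$, and so $(\tau, r)$ lies in the first summand of $G(R \cup \hookrightarrow)$. In the second case, $\tau = s\, \tau'$ with $r\, s$, and $(\tau, r)$ lies in the second summand. In the third case, we get $\tau = s\, \tau'$ together with some $n$ and $\tau'' = \suf\, \tau'\, n$ such that $(\tau'', r) \in R$ or $\tau'' \hookrightarrow r$; in either subcase $(\tau'', r) \in R \cup \hookrightarrow$, and so $(\tau, r)$ lies in the third summand of $G(R \cup \hookrightarrow)$. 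Concluding with the strong coinduction principle yields $R \subseteq \hookrightarrow$.

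I expect the only subtle step to be a careful unpacking of the third disjunct of the hypothesis, particularly the side condition $n \leq \len\,\tau'$, which must be preserved to fit the shape of the third summand of $G$. This bound is implicit in the hypothesis through the typing of $\suf$ given just before Definition~\ref{def:hookright}, so no real obstacle arises; the remaining work is bookkeeping. Nothing else in the proof is more than symbol pushing, which is why the statement is phrased as an analogue of Lemma~\ref{lem:coindleadsto}: the only genuine novelty is that the strong form of Tarski's coinduction is required in order to accommodate the mixed $R \cup \hookrightarrow$ occurrence in the third clause.
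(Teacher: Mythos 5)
Your proof is correct and takes essentially the same route as the paper: the paper obtains Lemma~\ref{lem:coindhookright} precisely by applying the strong form of Tarski's coinduction principle ($X \sqsubseteq F(X \sqcup \nu F)$ implies $X \sqsubseteq \nu F$) to the generating operator of Definition~\ref{def:hookright}, exactly as you do. Your observation that the strong principle is what accommodates the $R \cup \hookrightarrow$ disjunction in the third clause is the same point the paper makes when it notes that, unlike Lemma~\ref{lem:coindleadsto}, this lemma requires Tarski's strong coinduction principle.
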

The following lemma is easily proved, by instantiating the parameter relation $R$, which occurs
in both the coinduction principles  of the relations $\leadsto, \hookrightarrow$, with the other relation:

\begin{lemma}[$\leadsto$ equals  $\hookrightarrow$]
  \label{lem:eqrel} 
  For all $\tau \in \Paths$ and $r \in \Pi$, $\tau \leadsto r$ if and only if $\tau \hookrightarrow r$.
\end{lemma}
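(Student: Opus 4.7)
The plan is to establish the two inclusions $\leadsto\;\subseteq\;\hookrightarrow$ and $\hookrightarrow\;\subseteq\;\leadsto$ separately, each by a single application of the coinduction principle of the opposite relation, exactly as the sentence preceding the statement suggests.

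For the forward inclusion I would invoke Lemma~\ref{lem:coindhookright} with $R \;\eqbydef\; \leadsto$. Given $(\tau,r)\in R$, one step of unfolding (Definition~\ref{def:leadsto}, which is the greatest-fixpoint characterisation of $\leadsto$) yields one of its three clauses. The first two clauses coincide verbatim with the first two disjuncts required by Lemma~\ref{lem:coindhookright}. The third gives $\tau = s\,\tau'$ with $(\tau',r)\in\;\leadsto$, and I would match it against the third disjunct by choosing $n = 0$, so that $\tau'' = \suf\,\tau'\,0 = \tau'$ and $(\tau'',r)\in R$. Thus $R \subseteq\;\hookrightarrow$, i.e., $\leadsto\;\subseteq\;\hookrightarrow$.

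For the reverse inclusion I would invoke Lemma~\ref{lem:coindleadsto} with $R \;\eqbydef\; \hookrightarrow$. Given $(\tau,r)\in R$, unfolding Definition~\ref{def:hookright} again matches the first two disjuncts trivially, but the third clause yields $\tau = s\,\tau'$ together with $\suf\,\tau'\,n \hookrightarrow r$ for some $n \leq \len\,\tau'$, whereas Lemma~\ref{lem:coindleadsto} demands $(\tau',r)\in R$, i.e., $\tau' \hookrightarrow r$. This mismatch, caused by the fact that $\hookrightarrow$ is allowed to ``skip'' a finite prefix whereas $\leadsto$ proceeds one state at a time, is the single genuinely new piece of reasoning.

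To bridge the gap I would prove as an auxiliary lemma that $\suf\,\tau\,n \hookrightarrow r$ implies $\tau \hookrightarrow r$, by induction on $n$ using only the unfolding of $\hookrightarrow$. The base $n = 0$ is immediate since $\suf\,\tau\,0 = \tau$. For $n = k+1$, the constraint $n \leq \len\,\tau$ forces $\tau = s\,\tau'$ with $\suf\,\tau\,(k+1) = \suf\,\tau'\,k \hookrightarrow r$; the induction hypothesis then yields $\tau' \hookrightarrow r$, and clause (iii) of Definition~\ref{def:hookright} applied with $n = 0$ and $\tau'' = \tau'$ gives $s\,\tau' \hookrightarrow r$ as required. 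Once this auxiliary fact is in hand, the case (iii) of the reverse inclusion is dispatched by applying it to $\tau'$ and $n$, and both coinductive arguments go through mechanically.
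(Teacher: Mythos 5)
Your proof is correct and follows exactly the route the paper indicates: each inclusion is obtained by instantiating the coinduction principle of one relation (Lemmas~\ref{lem:coindleadsto} and~\ref{lem:coindhookright}) with the other relation as $R$. The only detail the paper elides --- the auxiliary fact that $\suf\,\tau\,n \hookrightarrow r$ implies $\tau \hookrightarrow r$, proved by induction on $n$ --- is precisely the one you identify and handle correctly.
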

Using the coinduction principle for $\hookrightarrow$ and the above equality, as well as the induction principle for the functional $\vdash_{ {\cal S, H}, \, \nu\mu\vdash_{\cal S, H}  } $
we obtain, in a rather involved proof  mixing induction and coinduction:

\begin{theorem}[Soundness of $\Vdash$]
  \label{th:soundps2}
  If for all $\varphi' \in \calH$,  $\calS \models \varphi'$, then $\calS\!,\calH \Vdash \varphi$ implies $\calS\models \varphi$.
\end{theorem}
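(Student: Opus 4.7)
Unfolding the definition of validity, the conclusion $\calS \models \varphi$ amounts to showing that every path $\tau$ with $(\lhs\,\varphi)(\hd\,\tau)$ satisfies $\tau \leadsto (\rhs\,\varphi)$, equivalently $\tau \hookrightarrow (\rhs\,\varphi)$ by Lemma~\ref{lem:eqrel}. The plan is to apply the coinduction principle of Lemma~\ref{lem:coindhookright} to a judiciously chosen relation $R$. Writing $Y = \nu\mu\!\vdash_{\cal S,H}$, so that $Y = \mu\!\vdash_{\cal S,H,Y} = \bigcup_{n\geq 1} \vdash_{\cal S,H,Y}^{n}(\emptyset)$ by Kleene, I would take
\[
R \,\eqbydef\, \{(\tau,r) \mid \exists m.\ \tau \hookrightarrow m \ \wedge \ m \reach r \in Y\}.
\]
Any pair $(\tau,\rhs\,\varphi)$ with $(\lhs\,\varphi)(\hd\,\tau)$ lies in $R$ (take $m = \lhs\,\varphi$; by clause (i) or (ii) of $\hookrightarrow$, $l(\hd\,\tau)$ already implies $\tau \hookrightarrow l$), so it suffices to prove $R \subseteq\, \hookrightarrow$. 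The reason to package $\tau \hookrightarrow m$ into $R$ rather than the stronger $m(\hd\,\tau)$ is that the rule \textsf{[Tra]} will force us to move from $\tau$ to one of its suffixes, and the third clause of $\hookrightarrow$ is precisely the one designed to absorb such jumps.

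\paragraph*{Discharging the coinductive step.}
For each $(\tau,r) \in R$ with witness $m$ I would first unfold $\tau \hookrightarrow m$ as a greatest fixpoint. In its third subcase, $\tau = s\,\tau'$ and $\tau'' = \suf\,\tau'\,k$ with $\tau'' \hookrightarrow m$, so $(\tau'',r) \in R$ via the same witness $m$ and clause (iii) of Lemma~\ref{lem:coindhookright} closes at once. Otherwise $m(\hd\,\tau)$, and I would proceed by strong induction on the least $n$ with $m \reach r \in \vdash_{\cal S,H,Y}^{n}(\emptyset)$, splitting on the rule that introduced $m \reach r$ at that stage. \textsf{[Trv]} and \textsf{[Hyp]} close directly: \textsf{[Trv]} places $(\tau,r)$ into clauses (i) or (ii); \textsf{[Hyp]} invokes the theorem's hypothesis $\calS \models m \reach r$, yielding $\tau \hookrightarrow r$ via Lemma~\ref{lem:eqrel}, which is then unfolded. \textsf{[Str]} and \textsf{[Spl]} expose a refined witness $m'$ at depth $n-1$ with $m'(\hd\,\tau)$, to which the inner IH applies. \textsf{[Stp]} exploits $m(\hd\,\tau)$ together with $m \sqcap \final \sqsubseteq \bot$ to force $\tau = s\,\tau'$ with $s \to \hd\,\tau'$; then $(\post m)(\hd\,\tau')$ and $\post m \reach r \in Y$, so $(\tau',r) \in R$ and clause (iii) closes with $\tau'' = \tau'$.

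\paragraph*{The main obstacle.}
The delicate case, and the reason for the whole setup, is \textsf{[Tra]}. We have $\calS \models m \reach m'$ for some intermediate predicate $m'$ and $m' \reach r \in \vdash_{\cal S,H,Y}^{n-1}(\emptyset)$; from $m(\hd\,\tau)$ and validity we obtain $\tau \leadsto m'$, hence $\tau \hookrightarrow m'$ by Lemma~\ref{lem:eqrel}. I would then unfold $\tau \hookrightarrow m'$ one more time: its first two clauses give $m'(\hd\,\tau)$, upon which the inner IH at depth $n-1$ with the refreshed witness $m'$ closes the goal; its third clause produces a suffix $\tau''$ of $\tau$ with $\tau'' \hookrightarrow m'$ and $m' \reach r \in Y$, so $(\tau'',r) \in R$ and clause (iii) of the coinduction condition is discharged. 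The genuine difficulty lies in this three-level interleaving --- an outer coinduction on $\hookrightarrow$, an inner strong induction on the Kleene depth of $m \reach r$ in $Y$, and a further unfolding of $\hookrightarrow$ inside the \textsf{[Tra]} case --- which is exactly the ``involved proof mixing induction and coinduction'' announced by the theorem. Packaging a $\hookrightarrow$-witness (rather than a head-witness) into $R$ from the start is the design choice that makes the non-constructive hypothesis $\calS \models l \reach m$ of \textsf{[Tra]} compatible with the coinductive shape of the goal.
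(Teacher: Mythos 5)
Your proposal is correct and takes essentially the same route as the paper, which proves this theorem by exactly the ingredients you assemble: Lemma~\ref{lem:eqrel} to pass from $\leadsto$ to $\hookrightarrow$, the coinduction principle of Lemma~\ref{lem:coindhookright} applied to a relation carrying a witness formula in $\nu\mu\!\vdash_{{\cal S,H}}$, and an inner induction over the inductive layer $\mu\!\vdash_{{\cal S,H},\,\nu\mu\vdash_{{\cal S,H}}}$ (your strong induction on Kleene depth is the same device, justified by the continuity the paper already invokes). Your key design choice --- storing a $\hookrightarrow$-witness rather than a head-witness in $R$ so that the suffix-jumps forced by \textsf{[Tra]} land back in $R$ or in $\hookrightarrow$ --- is precisely what makes the argument close, and all six rule cases are discharged correctly.
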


   \begin{example}
   \label{ex:ps2}
   We sketch a proof of the fact that the transition system $\calS$ denoted by the state machine in Figure~\ref{fig:sum}  meets its functional correcteness property:
  (i) $\calS \models (c = c_0)\reach (c =c_2 \wedge s =  m\times (m+1)/2)$. We first show 
   %that, for the component $\calS''$ of $\calS$ denoted by the arrow from $c_0$ to $c_1$, one has
   (ii) $\calS \models (c = c_0) \reach (c = c_1 \wedge i = 0 \wedge s = 0)$, which can be done using
   in sequence the rules \textsf{[Stp]}, \textsf{[Str]}, and \textsf{[Trv]} of the $\Vdash$ proof system together with its soundness.
   Using (ii) and the \textsf{[Tra]} rule, (i) reduces to
    proving (iii) $\calS \Vdash  (c = c_1 \wedge i = 0 \wedge s = 0)\reach (c =c_2 \wedge s =  m\times (m+1)/2)$. Next, in Examples~\ref{ex:tscomp} and~\ref{ex:redtoinv}
   we established\footnote{Example~\ref{ex:redtoinv} used the proof system $\Vdash$ and its Lemma~\ref{lem:redtoinv}, but $\Vdash$ and its corresponding Lemma~\ref{lem:redtoinv2} can be used just as well.} $\calS \models (c = c_1 \wedge i = 0 \wedge s = 0) \reach (c = c_1 \wedge i = m \wedge s = i\times(i+1)/2 )$, hence, using this and the \textsf{[Tra]} rule,  (iii)  reduces to proving
   $\calS \Vdash (c = c_1 \wedge i = m \wedge s = i\times(i+1)/2 ) \reach (c = c_2 \wedge s = m\times(m+1)/2 )$. This  is performed by applying in sequence the rules \textsf{[Stp]}, \textsf{[Str]}, and \textsf{[Trv]}, which concludes the proof.
%\david{Strangely, the footnote appears on the next page!}
 \end{example}      

 \paragraph*{Completeness.}
 By  analogy with Theorem~\ref{th:completeness}, but using Lemma~\ref{lem:redtoinv2} instead of Lemma~\ref{lem:redtoinv}:
 
  \begin{theorem}[Completeness of $\Vdash$]
    \label{th:completeness2}
   $\calS \models \varphi$ implies $\calS\Vdash \varphi$.
  \end{theorem}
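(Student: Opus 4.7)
The plan is to replay the constructive argument used for Theorem~\ref{th:completeness}, with the only change being that the ``inductive invariance'' lemma invoked at the end is now Lemma~\ref{lem:redtoinv2} (which targets $\Vdash$) rather than Lemma~\ref{lem:redtoinv}. Concretely, assume $\calS \models \varphi$ with $\varphi = l \reach r$ and define the witness state predicate
$$q \;\eqbydef\; \lambda s.\; \neg\, r\, s \; \wedge \; \forall \tau \in \Paths.\; (s = \hd\, \tau \Rightarrow \tau \leadsto r).$$
The goal is then to verify the three side-conditions of Lemma~\ref{lem:redtoinv2}: (a) $l \sqsubseteq q \sqcup r$, (b) $q \sqcap \final \sqsubseteq \bot$, and (c) $\post q \sqsubseteq q \sqcup r$; once this is done, Lemma~\ref{lem:redtoinv2} immediately yields $\calS\Vdash l \reach r$, i.e., $\calS\Vdash \varphi$.

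For (a), suppose $l\, s$. If $r\, s$ holds we are done; otherwise the validity hypothesis $\calS \models l \reach r$ tells us that every path $\tau$ with $\hd\, \tau = s$ satisfies $\tau \leadsto r$, so $q\, s$. For (b), if $s$ is final then the singleton $s$ is the only path with $\hd\, \tau = s$; the conjunct $\tau \leadsto r$ in the definition of $q$ forces $r\, s$ by clause (i) of Definition~\ref{def:leadsto}, contradicting the $\neg r\, s$ conjunct of $q\, s$. For (c), assume $(\post q)\, s$, i.e., there is $s'$ with $q\, s'$ and $s' \to s$. Either $r\, s$ (and we are done), or we must show $q\, s$. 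In the latter case, pick any $\tau$ with $\hd\, \tau = s$: then $s'\, \tau$ is a path with $\hd\,(s'\, \tau) = s'$, so by $q\, s'$ we get $s'\, \tau \leadsto r$; unfolding Definition~\ref{def:leadsto} and using $\neg r\, s'$ (also from $q\, s'$) leaves only clause (iii), which gives $\tau \leadsto r$.

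Since completeness of the purely coinductive system $\vdash$ already rested on exactly these three inclusions, there is no real new mathematical difficulty in this theorem; the work has been shifted to proving Lemma~\ref{lem:redtoinv2}, whose proof (as displayed in the excerpt) requires the detailed coinductive construction of the witness sequence $X_0,\ldots,X_8$ inside the mixed inductive-coinductive system $\Vdash$. In other words, the ``hard part'' for completeness of $\Vdash$ is entirely encapsulated in Lemma~\ref{lem:redtoinv2}, and Theorem~\ref{th:completeness2} itself is just the semantic step that plugs the canonical predicate $q$ defined from $\leadsto$ into that lemma.
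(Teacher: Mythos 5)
Your proposal is correct and follows exactly the paper's intended argument: the paper proves this theorem ``by analogy with Theorem~\ref{th:completeness}'', i.e., by instantiating the same canonical predicate $q \eqbydef \lambda s.\,\neg r\, s \wedge \forall \tau \in \Paths.(s = \hd\,\tau \Rightarrow \tau \leadsto r)$, checking the three inclusions, and invoking Lemma~\ref{lem:redtoinv2} in place of Lemma~\ref{lem:redtoinv}. Your verification of conditions (a)--(c) fills in details the paper leaves implicit, and they are all sound.
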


  \paragraph*{Compositionality.}
  Remembering Definition~\ref{def:weakcomp} of asymmetrical compositionality w.r.t\ formulas:
  
  \begin{theorem}
    \label{th:ps2comp} $\Vdash$ is  asymetrically compositional with respect to formulas.
  \end{theorem}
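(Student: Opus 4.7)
The plan is to exploit soundness and completeness in a round-trip: convert the two provability premises into semantic validity, combine them there, and convert the resulting validity back into a provability statement with empty hypothesis set. Concretely, suppose $\calS \Vdash \varphi'$ and $\calS, \{\varphi'\} \Vdash \varphi$.

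First, applying Theorem~\ref{th:soundps2} to $\calS \Vdash \varphi'$ (which is by definition $\calS, \emptyset \Vdash \varphi'$, and the emptily-quantified precondition ``for all $\varphi'' \in \emptyset$, $\calS \models \varphi''$'' holds vacuously), we obtain $\calS \models \varphi'$. Second, this makes the hypothesis set $\{\varphi'\}$ consist entirely of valid formulas, so the precondition of Theorem~\ref{th:soundps2} is met for the second premise; applying soundness to $\calS, \{\varphi'\} \Vdash \varphi$ yields $\calS \models \varphi$. Finally, Theorem~\ref{th:completeness2} promotes this semantic validity back into syntactic provability with no hypotheses, giving $\calS \Vdash \varphi$, as required.

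There is essentially no new technical difficulty at this point: the hard work has been absorbed into the proofs of Theorems~\ref{th:soundps2} and~\ref{th:completeness2}. The only subtlety worth pointing out is that this short argument is only possible because the soundness statement is formulated in the parameterised form ``if every hypothesis is semantically valid, then every provable formula is semantically valid,'' rather than in a coarser form that would ignore hypotheses; this parameterisation is precisely what turns asymmetrical compositionality w.r.t.\ formulas into a corollary of soundness and completeness.
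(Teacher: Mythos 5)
Your proof is correct and follows exactly the paper's own argument: soundness of $\Vdash$ (with its hypothesis-validity precondition, vacuous for the empty set and then discharged by the first step) yields $\calS \models \varphi'$ and then $\calS \models \varphi$, and completeness converts the latter back to $\calS \Vdash \varphi$. Your closing remark about the parameterised form of the soundness statement being what makes this round-trip work is a fair observation, though not needed for correctness.
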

  \begin{proof}
    We have to show that if  (i) $\calS\Vdash \varphi'$ and (ii) $\calS\!,  \{\varphi'\} \Vdash \varphi $ then $\calS\Vdash \varphi$.   
    Now, (i) and (ii) and the soundness of $\Vdash$ imply  $\calS \models \varphi'$ and  then $\calS \models \varphi$, and then
    the conclusion $\calS\Vdash \varphi$ holds by the completeness of $\Vdash$.
  \end{proof}
  
     Note that the statement (i) can be replaced by a weaker $\calS'\!\Vdash \varphi'$  for  components $\calS' \leftslice \calS$, thanks to the 
     soundness and completenesss of $\Vdash$ and of Theorem~\ref{th:tscomp}. This allows us to mix 
     the compositionality of $\Vdash $ with respect to transition systems and the asymetrical one with respect to formulas.

     The $\Vdash$ proof system is thus better at compositionality than $\vdash$, thanks to the inclusion of inductive rules, in particular, of the rule $\textsf{[Hyp]}$, but
     at the cost of a more involved soundness proof.
  It still has a problem: the asymmetry of the $\textsf{[Tra]}$ rule, required by the soundness proof, is not  elegant since the rule
  mixes semantics $\models$ and syntax $\Vdash$. This is not only an issue of elegance, but a practical issue as well.
  \begin{example}
    \label{ex:astra}
    We attempt to prove the property (i) from Example~\ref{ex:ps2} using the asymmetrical  compositionality
    of $\Vdash$ w.r.t formulas. The first step, similar to that of Example~\ref{ex:ps2},
    is proving  (ii') $\calS \Vdash (c = c_0) \reach (c = c_1 \wedge i = 0 \wedge s = 0)$ by using
    in sequence the rules \textsf{[Stp]}, \textsf{[Str]}, and \textsf{[Trv]} of  $\Vdash$.
    Then, Theorem~\ref{th:ps2comp} reduces (i) to (iii') $\calS,\{(c = c_0) \reach (c = c_1 \wedge i = 0 \wedge s = 0)\} \Vdash  (c = c_0)\reach (c =c_2 \wedge s =  m\times (m+1)/2)$.
    The natural next step would be to use  the  \textsf{[Tra]} rule of  $\Vdash$,
    splitting  (iii') in two parts:  $\calS,\{(c = c_0) \reach (c = c_1 \wedge i = 0 \wedge s = 0)\} \Vdash  (c = c_0)\reach (c = c_1 \wedge i = 0 \wedge s = 0)$,  discharged by \textsf{[Hyp]}, and
   then $\calS,\{(c = c_0) \reach (c = c_1 \wedge i = 0 \wedge s = 0)\} \Vdash  (c = c_1  \wedge i = 0 \wedge s = 0)\reach  \reach (c = c_1 \wedge i = 0 \wedge s = 0)$. But the \textsf{[Tra]} rule of  $\Vdash$, as it is, does not allow this. Hence, when one uses compositionality, one may get stuck in proofs because of technical issues with \textsf{[Tra]}.
  \end{example}
These issues are solved in the third proof system, which incorporates even more induction that the second one, and specialises its coinduction even closer to our problem domain. The third proof system also has better compositionality features. These gains come at the cost of an even more involved soundness proof.

\section{A Symmetrically-Compositional Proof System}
 
\begin{figure}[t]
\small
\begin{align*}
 \textsf{[Hyp]~} 
&
\dfrac{}
      {\calS\!,\calH \Vvdash (\true,\varphi)} \mu\ \mathit{\ if \ } (\false,\varphi) \in \calH\\
\textsf{[Trv]~} 
&
\dfrac{}
      {\calS\!,\calH \Vvdash (b,r \reach r)}\mu\\
\textsf{[Str]~} 
&
\dfrac{\calS\!,\calH \Vvdash (b,l' \reach r)}
      {\calS\!,\calH \Vvdash (b,l \reach r)}\mu \ \mathit{\ if \ } l \sqsubseteq l'\\
\textsf{[Spl]~}       
&
\dfrac{\calS\!,\calH \Vvdash (b,l_1 \reach r) \qquad \calS\!,\calH \Vvdash (b,l_1 \reach r)}
      {\calS\!,\calH \Vvdash (b,(l_1 \sqcup l_2)) \reach r} \mu\\
\textsf{[Tra]~}       
&
\dfrac{\calS\!, \calH  \Vvdash  (b,l \reach m) \qquad \calS,\calH \Vvdash (b,m \reach r)}
      {\calS\!,\calH \Vvdash (b,l \reach r)} \mu \\                
\textsf{[Stp]~} 
&
\dfrac{\calS\!,\calH  \Vvdash (\true,\post l \reach r)}
      {\calS\!,\calH \Vvdash (b,l \reach r)}\mu \mathit{\ if \ }  l \sqcap \final \sqsubseteq \bot \\
 \textsf{[Cut]~} 
&
\dfrac{\calS\!,\calH  \Vvdash (\false, \varphi') \qquad  \calS\!,\calH\cup\{(\false,\varphi')\}  \Vvdash (b, \varphi)}
      {\calS\!,\calH \Vvdash (b, \varphi)}\mu \\
 \textsf{[Cof]~} 
&
\dfrac{\calS\!,\calH\cup\{(\false,\varphi)\}  \Vvdash (\false, \varphi)}
      {\calS\!,\calH \Vvdash (b, \varphi)}\mu\\
  \textsf{[Clr]~} 
&
\dfrac{\calS\!,\calH \Vvdash (b, \varphi)}
      {\calS\!,\calH\cup\{(b',\varphi')\}  \Vvdash (b, \varphi)}\mu     
 \end{align*}
   \caption{\label{fig:pf3}  Inductive proof system, with coinduction managed in hypotheses.} 
\end{figure}

Our third proof system is depicted in Figure~\ref{fig:pf3}. A first difference with the previous one is that hypotheses and conclusions
are pairs of a Boolean tag and a formula. We call them tagged formulas, or simply formulas when there is no risk of confusion. The role of the tags is  to avoid unsoundness.

  The second difference is that the proof system is essentially inductive, i.e., there are no more infinite proofs, and
 no coinduction principle any more; whatever coinduction remains is tailored to our problem and emulated by the proof system, as can be seen seen below in the description of the proof system's rules.

Another difference, especially with the second proof system $\Vdash$, is that the hypotheses set is not constant.  The following rules change the hypotheses set.  First, the \textsf{[Cut]} rule, which says that in order to prove $(b,\varphi)$ under hypotheses $\calH$, it is enough to prove $(\false,\varphi')$
  -  for some formula $\varphi'$ - under hypotheses $\calH$, and to prove  $(b,\varphi)$ under $\calH \cup \{(\false,\varphi')\}$. This resembles a standard cut rule, but it is taylored to our specific setting. Second, the \textsf{[Cof]} rule adds a ``copy'' of the conclusion in the
  hypotheses, but tagged with \false\ -  and the new conclusion is also tagged with $\false$. It is called  this way in reference to the Coq \texttt{cofix} tactic that builds coinductive proofs in Coq  also by copying a conclusion in the hypotheses; hence, we emulate in our proof system's hypotheses a certain existing coinduction mechanism, and taylor it to proving reachability formulas. Note that, without the tags,  one could simply assume any formula by \textsf{[Cof]} and prove it by \textsf{[Hyp]}, which would be unsound since it would prove any
  %\david{"any" or "a"}
  formula, valid or not. Third, the \textsf{[Clr]} rule  removes a formula from the hypotheses.
  Note that the \textsf{[Stp]} rule, when applied bottom to top, switches the Boolean from whatever value $b$ it has to $\true$. Hence, it is  \textsf{[Stp]} that makes ``progress'' in our setting, enabling the use of \textsf{[Hyp]} in a sound way.
  The other rules have the same respective roles as their homonyms in  $\Vdash$.

\paragraph*{Soundness.} We present the 
  soundness proof of $\Vvdash$ at a higher level of abstraction than for
  the other proof systems. For example, we define  $\Vvdash$-proofs as finite trees, and   assume that finite trees are known to the readers.
  For the other proof systems we adopted a more formal approach
  because the  proofs in those systems were certain kinds of possibly infinite trees, whose a priori knowledge
  cannot be  assumed.
  
  \begin{definition}[Proof]
    A \emph{proof} of a tagged formula $(b,\varphi)$ for a transition system $\calS$  and under hypotheses $\calH$ - for short,
    a proof of $\calS\!,\calH \Vvdash (b,\varphi)$  
    - is a finite tree, whose root is labelled by the sequent  $\calS\!,\calH \Vvdash (b,\varphi)$, whose nodes are also labelled by sequents, obtained by applying bottom-up the rules depicted in Figure~\ref{fig:pf3}.
  \end{definition}  
  We sometimes just write $\calS\!,\calH \Vvdash (b,\varphi)$ for ``there is a proof of  $\calS\!,\calH \Vvdash (b,\varphi)$'' as defined above. The following definition introduces the sets of all hypotheses and of all conclusions occuring in a proof.
  
  \begin{definition}[All hypotheses and conclusions occuring  in proof] Assume a proof $\Theta$  of $\calS\!,\calH \Vvdash (b,\varphi)$. The set $\hypo$ is the union of all sets $\calH'$ of  formulas, for all the node-labels $\calS\!,\calH' \Vvdash (b',\varphi')$
    in  the tree $\Theta$. The set $\conc$  is the set of all  formulas  $(b',\varphi')$, for all the node-labels $\calS\!,\calH' \Vvdash (b',\varphi')$ occuring in~$\Theta$.
  \end{definition}
Hereafter in the currenct subsection about soundness we assume a proof (tree) $\Theta$ of $\calS\!,\calH \Vvdash (b,\varphi)$ with corresponding sets $\hypo$ and $\conc$.
The following technical lemma is  proved by structural induction on such trees. It says that tagged fomulas in  $\hypo$ are  among the
hypotheses  $\calH$ present at the root of  $\Theta$, plus the conclusions $\conc$, and, except perhaps
%\david{Comment \c ca, perhaps ?}
for those in $\calH$, the formulas in $\hypo$ are tagged with
  $\false$.

  \begin{lemma}
    \label{lem:aux}
    $\hypo \subseteq  \calH \cup \conc$, and,  if $(b', \varphi') \in \hypo \setminus \calH$, then $b' = \false$.
  \end{lemma}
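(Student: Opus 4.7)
The plan is to prove this by structural induction on the proof tree $\Theta$. Because the statement mentions $\calH$, the particular hypothesis set at the root of $\Theta$, I would first generalise it so induction goes through: for every subtree $\Theta'$ of $\Theta$, if its root is labelled $\calS\!,\calH' \Vvdash (b',\varphi')$ and $\hypo(\Theta'),\conc(\Theta')$ denote the sets from the definition, then $\hypo(\Theta') \subseteq \calH' \cup \conc(\Theta')$ and every element of $\hypo(\Theta') \setminus \calH'$ carries the $\false$ tag. The lemma is the instance $\Theta' = \Theta$.

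The base cases are the leaf rules \textsf{[Hyp]} and \textsf{[Trv]}: in a leaf, $\hypo(\Theta')$ equals the root hypothesis set $\calH'$, so the inclusion is trivial and $\hypo(\Theta') \setminus \calH'$ is empty. The ``routine'' inductive cases are \textsf{[Str]}, \textsf{[Spl]}, \textsf{[Tra]} and \textsf{[Stp]}: each premise shares its hypothesis set with the conclusion, so $\hypo(\Theta') = \calH' \cup \bigcup_i \hypo(\Theta'_i)$ where the $\Theta'_i$ range over immediate subtrees; the induction hypothesis applied to each $\Theta'_i$, combined with $\conc(\Theta'_i) \subseteq \conc(\Theta')$, yields both conclusions immediately.

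The interesting cases are \textsf{[Cut]}, \textsf{[Cof]} and \textsf{[Clr]}. For \textsf{[Cut]}, the second premise enlarges the hypotheses by $(\false,\varphi')$, but $(\false,\varphi')$ is precisely the sequent label at the root of the first premise, hence lies in $\conc(\Theta')$ and is tagged $\false$; applying the inductive hypothesis to both premises and combining yields the result. The case \textsf{[Cof]} is similar: the added hypothesis $(\false,\varphi)$ appears as the tagged conclusion of the single premise, so it is already in $\conc(\Theta')$ with the correct tag. For \textsf{[Clr]}, the current $\calH'$ strictly contains the premise's hypothesis set, and the subtlety is that the extra formula $(b',\varphi')$ may be tagged with $\true$; the generalised statement handles this cleanly because $(b',\varphi')$ is in $\calH'$, hence does not contribute to $\hypo(\Theta') \setminus \calH'$, and all other leftover hypotheses inherit their $\false$ tag from the inductive hypothesis on the premise.

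I do not foresee a conceptual obstacle: the argument is pure bookkeeping. The main pitfall is keeping track of how the meaning of ``$\calH$'' varies across subtrees under \textsf{[Cut]}, \textsf{[Cof]} and \textsf{[Clr]}; the generalisation to arbitrary subtrees is what tames this, and the key observation that makes the induction work is the syntactic coincidence, in \textsf{[Cut]} and \textsf{[Cof]}, between the hypothesis freshly added by the rule and the (tagged) conclusion of the corresponding premise, which places that new formula inside $\conc(\Theta')$ automatically.
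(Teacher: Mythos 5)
Your proof is correct and matches the paper's approach: the paper also establishes this lemma by structural induction on the proof tree, which implicitly requires exactly the strengthening you make (letting the root hypothesis set vary over subtrees) and hinges on the same observation that the hypotheses added by \textsf{[Cut]} and \textsf{[Cof]} coincide with $\false$-tagged premise conclusions, hence land in $\conc$.
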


  Some  more notions need to be defined.
First, a \emph{pad} in a tree  is a sequence of consecutive edges, and the length of a pad is the number of nodes on the pad. Hence, the length of a pad is strictly positive.
  \begin{definition}
    The \emph{last occurence of a tagged formula $(b',\varphi') \in \conc$ in $\Theta$ } is the maximal length of a pad from the root $\calS\!,\calH \Vvdash (b,\varphi)$ of $\Theta$ to some node labelled by $\calS, \calH' \Vvdash (b',\varphi')$. For formulas  $(b',\varphi') \notin \conc$ we define by convention their last occurence in $\Theta$ to be $0$. This defines a total function $\lastOcc : \mathbb{B} \times  \Phi \to \mathbb{N}$. 
  \end{definition}
  Let also $\fPaths$ denote the set of finite paths of the transition system under consideration.
  We now define the set $\calD \;\eqbydef\; \{(\tau',b',\varphi') \in  \fPaths \times \mathbb{B} \times \Phi \mid (\lhs\, \varphi')(\hd\, \tau') \wedge (b',\varphi') \in \conc\}$ on which we shall reason by well-founded induction.
  We equip $\calD$ with a well-founded order, namely, with the restriction to $\calD$ of the lexicographic-product order on $\fPaths \times \mathbb{B} \times \Phi$  defined by $(\tau_1,b_1,\varphi_1) \prec (\tau_2,b_2,\varphi_2) $ iff

  \begin{enumerate}
 \item   $\len\, \tau_1 < \len,\tau_2$, or
 \item $\len\, \tau_1 = \len,\tau_2$ and $b_1 < b_2$, with $<$ on Booleans is defined by $\false < \true$, or
 \item  $\len\, \tau_1 = \len,\tau_2$ and $b_1 = b_2$, and $\lastOcc(b_1,\varphi_1) > \lastOcc(b_2,\varphi_2)$.
   \end{enumerate}
  The first two orders in the product, on natural numbers and on Booleans, are well-founded. For the third one, since the order $\prec$ on $\fPaths \times \mathbb{B} \times \Phi$ is restricted to $\calD$,
  all  last occurences are bounded by the height of $\Theta$, ensuring that the inequality  $\lastOcc(b_1,\varphi_1) > \lastOcc(b_2,\varphi_2)$ induces a well-founded order.  Hence, the restriction of $\prec$  on $\calD$ (also denoted by $\prec$) is a well-founded order as well. The following lemma uses this.

 \begin{lemma}
   \label{lem:key}
   Assume $\calS\!,\calH \Vvdash (\false, l \reach r)$ and  for all $(b',\varphi')\in \calH$, $b' = \false$ and $\calS\models \varphi'$.
   Let $\calD$ be the domain corresponding to $\calS\!,\calH \Vvdash (\false, l \reach r)$.
   Then, for all $(\tau,b,\varphi) \in \calD$,  there is
   $k \leq \len\, \tau$ such that $(\rhs\, \varphi)\, (\tau\,  k)$.
 \end{lemma}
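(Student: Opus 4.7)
The approach is a well-founded induction on $(\tau,b,\varphi) \in \calD$ with respect to $\prec$. For a given triple, since $(b,\varphi) \in \conc$, at least one node of $\Theta$ carries a sequent of the form $\calS,\calH'' \Vvdash (b,\varphi)$; I pick one such node at maximum pad-length from the root, so its depth equals $\lastOcc(b,\varphi)$, and case-split on the rule applied there. Picking the deepest such node is what will give the $\lastOcc$ component of $\prec$ its discriminating power.

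Three rules are immediately discarded by this maximality choice: $\textsf{[Cut]}$, $\textsf{[Clr]}$, and $\textsf{[Cof]}$ with $b = \false$. Each carries a premise whose label is still $(b,\varphi)$, which would sit at depth $\lastOcc(b,\varphi) + 1$, contradicting the maximality of $\lastOcc(b,\varphi)$. Among the remaining rules, $\textsf{[Trv]}$ is immediate (take $k = 0$ since $\lhs\,\varphi = \rhs\,\varphi$ holds at $\hd\,\tau$). For $\textsf{[Hyp]}$ we have $b = \true$ and $(\false,\varphi)$ in the local hypotheses; by Lemma~\ref{lem:aux} this is either in $\calH$ (use $\calS \models \varphi$ together with Lemma~\ref{lem:eqrel} to extract $k$ from $\tau \hookrightarrow \rhs\,\varphi$) or in $\conc$ (apply the IH to $(\tau,\false,\varphi) \prec (\tau,\true,\varphi)$, where the Boolean component decreases). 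The rules $\textsf{[Str]}$, $\textsf{[Spl]}$, and $\textsf{[Cof]}$ with $b = \true$ each reduce to the IH on a premise $(b,\varphi')$ with $\varphi' \neq \varphi$; since the premise lies at depth $\geq \lastOcc(b,\varphi) + 1$, one gets $\lastOcc(b,\varphi') > \lastOcc(b,\varphi)$ and hence the third (or, for $\textsf{[Cof]}$, the second) component of $\prec$ decreases. For $\textsf{[Spl]}$ one first picks the disjunct $l_i$ satisfied at $\hd\,\tau$.

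The two interesting cases are $\textsf{[Tra]}$ and $\textsf{[Stp]}$. For $\textsf{[Tra]}$ I first apply the IH to $(\tau,b,l \reach m)$ (smaller via the $\lastOcc$ component) to obtain $k_1 \leq \len\,\tau$ with $m(\tau\,k_1)$; then $(\suf\,\tau\,k_1,b,m \reach r)$ lies in $\calD$ and is $\prec$-smaller than $(\tau,b,l \reach r)$ (strictly shorter length when $k_1 > 0$, decreased $\lastOcc$ otherwise), so a second use of the IH yields $k_2$ with $r$ holding at position $k_2$ of the suffix, and $k := k_1 + k_2$ is the desired witness. For $\textsf{[Stp]}$, the side-condition $l \sqcap \final \sqsubseteq \bot$ together with $l(\hd\,\tau)$ forces $\hd\,\tau$ to be non-final, so by definition of $\Paths$ one has $\tau = s\,\tau'$ with $s \to \hd\,\tau'$; then $\post l$ holds at $\hd\,\tau'$ and $(\tau',\true,\post l \reach r) \in \calD$ is $\prec$-smaller by strictly decreased length, so the IH supplies $k'$ and $k := k' + 1$ works.

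The main obstacle is the bookkeeping around the three-component lexicographic order: in every rule-case one must identify which of the three components actually decreases, and the choice of a maximal-depth witness node for the induction is precisely what enables the $\lastOcc$ component to separate a conclusion from its premises and to exclude the self-referential rules $\textsf{[Cut]}$, $\textsf{[Clr]}$, and $\textsf{[Cof]}$-at-$\false$. Once this structure is set up, the case analysis itself is largely mechanical.
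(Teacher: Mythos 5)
Your proposal is correct and follows exactly the route the paper's scaffolding is built for: well-founded induction over $\calD$ with respect to $\prec$, a case analysis on the rule applied at a deepest node carrying the conclusion $(b,\varphi)$ (so that the $\lastOcc$ component rules out \textsf{[Cut]}, \textsf{[Clr]} and \textsf{[Cof]}-at-$\false$ and handles \textsf{[Str]}/\textsf{[Spl]}/\textsf{[Tra]}), Lemma~\ref{lem:aux} plus the validity of $\calH$ for \textsf{[Hyp]}, and the length decrease for \textsf{[Stp]} and the second premise of \textsf{[Tra]}. The handling of the $k_1=0$ subcase of \textsf{[Tra]} and of the Boolean decrease in \textsf{[Hyp]}/\textsf{[Cof]}-at-$\true$ is exactly the delicate bookkeeping the lexicographic order was designed for, so nothing is missing.
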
    
 
 As a corollary to Lemma~\ref{lem:key} we obtain:

 \begin{theorem}[Soundness of $\Vvdash$]
    \label{th:soundp32}
  If for all $(b',\varphi') \in \calH$, $b' = \false$ and $\calS \models \varphi'$, then $\calS\!,\calH \Vvdash (\false,\varphi)$ implies $\calS\models \varphi$.
 \end{theorem}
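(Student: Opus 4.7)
The plan is to derive the theorem directly as a corollary of Lemma~\ref{lem:key}. Write $\varphi$ as $l \reach r$, assume $\calS,\calH \Vvdash (\false, l \reach r)$ together with the stated hypothesis on $\calH$, and fix an arbitrary $\tau \in \Paths_\calS$ with $l\,(\hd\,\tau)$; the goal is then to establish $\tau \leadsto r$.

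First I would dispatch the easy case where $\tau$ is an infinite path. The set $R \eqbydef \{(\tau',r') \mid \tau' \text{ is infinite}\}$ satisfies the premise of the coinduction principle Lemma~\ref{lem:coindleadsto}, using exclusively its third alternative: if $\tau'$ is infinite then $\tau' = s\,\tau''$ with $\tau''$ also infinite, hence $(\tau'',r') \in R$. So $R \subseteq\,\leadsto$ and in particular $\tau \leadsto r$.

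For the case where $\tau$ is finite, observe that the root of the proof tree witnessing $\calS,\calH \Vvdash (\false,l \reach r)$ is labelled by this very sequent, so $(\false, l \reach r) \in \conc$, and therefore $(\tau,\false, l \reach r) \in \calD$. Lemma~\ref{lem:key}, whose hypotheses on $\calH$ match those of the present theorem, then produces a $k \leq \len\,\tau$ with $(\rhs\,(l \reach r))\,(\tau\,k)$, i.e.\ $r\,(\tau\,k)$. To convert this into $\tau \leadsto r$, I would route through $\hookrightarrow$ via Lemma~\ref{lem:eqrel}: the suffix $\suf\,\tau\,k$ has head $\tau\,k$ satisfying $r$, so $(\suf\,\tau\,k, r) \in\, \hookrightarrow$ by case (i) or (ii) of Definition~\ref{def:hookright}; a single application of case (iii) of the same definition (with $n = k - 1$ when $k \geq 1$, and trivially when $k = 0$) then lifts this to $\tau \hookrightarrow r$, and Lemma~\ref{lem:eqrel} delivers $\tau \leadsto r$.

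The real substance of the soundness proof has already been absorbed into Lemma~\ref{lem:key} and its delicate well-founded induction on $\calD$ under the three-component lexicographic order (path length, Boolean tag, last-occurrence depth); the corollary step sketched above is essentially bookkeeping. The only mildly subtle point is recognising that the root sequent of the proof tree itself contributes $(\false, l \reach r)$ to $\conc$, so that $(\tau,\false,l \reach r)$ lies in $\calD$, and handling the infinite-path branch separately since $\calD$ is defined on $\fPaths$ only. Once those two observations are in place, Lemma~\ref{lem:key} does all the work and the rest is a short chase through the definitions of $\leadsto$ and $\hookrightarrow$.
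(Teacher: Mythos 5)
Your proposal is correct and follows exactly the route the paper takes: Theorem~\ref{th:soundp32} is obtained as a corollary of Lemma~\ref{lem:key}, and your bookkeeping (infinite paths are vacuously in $\leadsto$ by coinduction on clause~(iii); the root sequent puts $(\false,l\reach r)$ in $\conc$ so that finite paths starting in $l$ land in $\calD$; the index $k$ is converted to $\tau\leadsto r$ via $\hookrightarrow$ and Lemma~\ref{lem:eqrel}) is sound and fills in the steps the paper leaves implicit.
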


 \paragraph*{Completeness.} Proving the completeness of $\Vvdash$ is the same as for the other proof system: prove a lemma reducing reachability to an invariance property and then show that for valid formulas that property holds.

 \begin{lemma}
     \label{lem:redtoinv3}
If $l \sqsubseteq q \sqcup r$, $q \sqcap \final \sqsubseteq \bot$, and $\post q \sqsubseteq q \sqcup r$ then $\calS\Vvdash  (\false, l \reach r)$.
 \end{lemma}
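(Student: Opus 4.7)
The plan is to exploit the two features that distinguish $\Vvdash$ from the previous proof system: the rule \textsf{[Cof]}, which lets us install a coinductive hypothesis tagged with $\false$, and the rule \textsf{[Stp]}, which is the only rule that flips the Boolean tag on the conclusion from $\false$ to $\true$, thereby enabling a later application of \textsf{[Hyp]}. The invariant $q$ plays exactly the role it already played in Lemma~\ref{lem:redtoinv2}; what changes is how coinduction is realised syntactically, namely inside a \emph{finite} proof tree whose closed leaves mention $q \reach r$ as an assumption.

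Concretely, starting from the goal $\calS\!,\emptyset \Vvdash (\false, l \reach r)$, I would first apply \textsf{[Str]} (using $l \sqsubseteq q \sqcup r$) and then \textsf{[Spl]}, reducing to the two subgoals $\calS\!,\emptyset \Vvdash (\false, q \reach r)$ and $\calS\!,\emptyset \Vvdash (\false, r \reach r)$, the latter closed immediately by \textsf{[Trv]}. For the former, I would apply \textsf{[Cof]} to install $(\false, q \reach r)$ in the hypothesis set, leaving the goal $\calS\!,\{(\false, q \reach r)\} \Vvdash (\false, q \reach r)$.

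From here \textsf{[Stp]} applies (its side-condition is the hypothesis $q \sqcap \final \sqsubseteq \bot$) and produces $\calS\!,\{(\false, q \reach r)\} \Vvdash (\true, \post q \reach r)$; note the crucial tag flip to $\true$. A further \textsf{[Str]} using $\post q \sqsubseteq q \sqcup r$ followed by \textsf{[Spl]} yields the two leaves $\calS\!,\{(\false, q \reach r)\} \Vvdash (\true, q \reach r)$ and $\calS\!,\{(\false, q \reach r)\} \Vvdash (\true, r \reach r)$, closed respectively by \textsf{[Hyp]} (applicable precisely because the stored hypothesis is tagged $\false$ while the current conclusion is tagged $\true$) and \textsf{[Trv]}.

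The only delicate point, and the one I expect to be the main obstacle when writing the argument formally, is the bookkeeping of the Boolean tags: \textsf{[Cof]} must be applied \emph{before} \textsf{[Stp]} flips the tag, and any attempt to close a $\false$-tagged subgoal by \textsf{[Hyp]} would violate the side-condition of that rule. The whole proof therefore has a unique shape, which reflects the intended reading of $\Vvdash$: \textsf{[Cof]} marks the entry of a coinductive loop on $q \reach r$, and \textsf{[Stp]} is what makes the ``productivity'' of that loop syntactically visible, allowing the loop to be closed soundly by \textsf{[Hyp]} within a finite tree.
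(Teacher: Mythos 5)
Your proposal is correct and constructs exactly the proof tree given in the paper: \textsf{[Str]} with $l \sqsubseteq q \sqcup r$, \textsf{[Spl]}, \textsf{[Trv]} on the $r \reach r$ branch, then \textsf{[Cof]} to install $(\false, q \reach r)$, \textsf{[Stp]} to flip the tag to $\true$, \textsf{[Str]} with $\post q \sqsubseteq q \sqcup r$, \textsf{[Spl]}, and closure by \textsf{[Hyp]} and \textsf{[Trv]}. Your remark that \textsf{[Cof]} must precede the tag-flipping \textsf{[Stp]} for \textsf{[Hyp]} to become usable matches the paper's own emphasis on this point.
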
  
 \begin{proof}
   We build a proof (tree) for $\calS\Vvdash (\false, l \reach r)$. The root of the  tree is a node $N_0$ labelled $\calS\Vvdash  (\false, l\reach r)$.
   $N_0$  has one successor $N_1$, generated by the \textsf{[Str]} rule, thanks to the hypothesis $l \sqsubseteq q \sqcup r$, and labelled
   $\calS\Vvdash  (\false, (q \sqcup r)\reach r)$. $N_1$ has two successors $N_{2,1}$ and $N_{2,2}$, generated by the \textsf{[Spl]} rule, and labelled
   $\calS\Vvdash  (\false, q \reach r)$ and $\calS\Vvdash  (\false, r \reach r)$, respectively.
Usinng the \textsf{[Trv]} rule, $N_{2,2}$  has no succesors.
    $N_{2,1}$ has one successor $N_3$, generated by the \textsf{[Cof]} rule, labelled
   $\calS\!, \{   (\false, q \reach r) \} \Vvdash  (\false, q \reach r)$. $N_3$ has one successor
   $N_4$, generated by the
   \textsf{[Stp]} rule, thanks to the hypothesis  $q \sqcap \final \sqsubseteq \bot$, and labelled
   $\calS\!, \{   (\false, q \reach r) \} \Vvdash  (\true, \post q \reach r)$. Note that the Boolean has switched from
   \false\ to \true, which enables us to later use  the \textsf{[Hyp]} rule. The node $N_4$ has one successor, 
   generated by the \textsf{[Str]} rule thanks to the hypothesis $\post q \sqsubseteq q \sqcup r$:
   $\calS\!, \{   (\false, q \reach r) \} \Vvdash  (\true,  (q \sqcup r) \reach r)$. $N_4$ has two successors $N_{5,1}$ and  $N_{5,2}$, labelled 
    $\calS\!, \{   (\false, q \reach r) \} \Vvdash  (\true, q \reach r)$ and  $\calS\!, \{   (\false, q \reach r) \} \Vvdash  (\true, r \reach r)$, respectively. Neither has any successor: $N_{5,1}$, by the \textsf{[Hyp]} rule, and $N_{5,2}$, by the $\textsf{[Trv]}$ rule. 
 \end{proof}

 By  analogy with Theorems~\ref{th:completeness} and \ref{th:completeness2}  but using Lemma~\ref{lem:redtoinv3} (instead of 
\ref{lem:redtoinv} and~\ref{lem:redtoinv2}, respectively) :

\begin{theorem}[Completeness of $\Vvdash$]
    \label{th:completeness3}
   $\calS \models \varphi$ implies $\calS\Vvdash \varphi$.
  \end{theorem}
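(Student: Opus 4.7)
The plan is to mimic the completeness arguments for Theorems~\ref{th:completeness} and~\ref{th:completeness2}, transporting them to the new proof system via Lemma~\ref{lem:redtoinv3}. Writing $\varphi = l \reach r$, I would reuse the canonical predicate
$q \eqbydef \lambda s.\, \neg r\, s \wedge \forall \tau \in \Paths.\, (s = \hd\, \tau \Rightarrow \tau \leadsto r)$
and verify that it satisfies the three hypotheses of Lemma~\ref{lem:redtoinv3}; the lemma then yields $\calS \Vvdash (\false, l \reach r)$ with no further coinductive work.

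For $l \sqsubseteq q \sqcup r$, take any $s$ with $l\, s$ and case-split on $r\, s$: if $r\, s$ holds then the right disjunct is satisfied; otherwise, validity $\calS \models l \reach r$ forces $\tau \leadsto r$ for every path $\tau$ headed at $s$, so $q\, s$. For $q \sqcap \final \sqsubseteq \bot$, a final state $s$ admits only the singleton path $\tau = s$, so $q\, s$ would entail $s \leadsto r$ and hence $r\, s$ by clause~(i) of Definition~\ref{def:leadsto}, contradicting the $\neg r\, s$ conjunct inside $q$. For $\post q \sqsubseteq q \sqcup r$, suppose $s' \to s$ with $q\, s'$ and assume $\neg r\, s$; for any path $\tau$ with $\hd\, \tau = s$, the extended sequence $s'\, \tau$ is a path headed at $s'$, so $s'\, \tau \leadsto r$ by $q\, s'$; since $\neg r\, s'$ rules out clauses~(i) and~(ii) of Definition~\ref{def:leadsto}, only clause~(iii) can apply and it gives $\tau \leadsto r$, whence $q\, s$.

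The main obstacle is essentially absent at this level. The hard coinductive content has already been absorbed into Lemma~\ref{lem:redtoinv3}, whose proof is the finite \textsf{[Cof]}/\textsf{[Stp]}/\textsf{[Hyp]} tree exhibited just above; and the three semantic verifications outlined here are literally the same ones that worked for $\vdash$ and $\Vdash$. As remarked after Theorem~\ref{th:completeness}, the result is purely theoretical: a practitioner of $\Vvdash$ would never reconstruct the canonical $q$ in practice (doing so would reduce to proving validity directly from the semantics, defeating the purpose of having a proof system), but would instead supply a concrete invariant meeting the hypotheses of Lemma~\ref{lem:redtoinv3}, which is the practically useful reduction.
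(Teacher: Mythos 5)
Your proposal is correct and takes essentially the same route as the paper: the paper proves Theorem~\ref{th:completeness3} ``by analogy with Theorems~\ref{th:completeness} and~\ref{th:completeness2}'', i.e., by instantiating Lemma~\ref{lem:redtoinv3} with exactly the canonical predicate $q \eqbydef \lambda s.\,\neg r\, s \wedge \forall \tau \in \Paths.(s = \hd\, \tau \Rightarrow \tau \leadsto r)$ and checking the three inclusions. Your explicit verifications of those inclusions (which the paper defers to the full version) are sound.
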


  \paragraph*{Compositionality w.r.t. Formulas} $\Vvdash$ has a symmetrical version of   compositionality w.r.t.\ formulas:
  \begin{theorem}
    \label{lem:compform}
    $\calS\!, \calH \cup \{ (\false,\varphi_1)\} \Vvdash (\false, \varphi_2)$ and  $\calS\!, \calH \cup \{ (\false,\varphi_2)\} \Vvdash (\false,\varphi_1)$
    imply  $\calS\!, \calH  \Vvdash (\false, \varphi_1)$ and  $\calS\!, \calH  \Vvdash (\false,\varphi_2)$.
  \end{theorem}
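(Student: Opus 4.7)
The plan is to give a direct syntactic derivation of both conclusions inside $\Vvdash$, without appealing to soundness or completeness. The three rules \textsf{[Cof]}, \textsf{[Cut]}, and \textsf{[Clr]} were designed precisely to manipulate hypotheses: \textsf{[Cof]} plants a tagged-$\false$ copy of the current goal in the hypotheses, \textsf{[Cut]} threads a locally proved formula into the hypothesis set, and \textsf{[Clr]} performs weakening by adjoining an unused hypothesis. Orchestrating these three moves is all that the proof requires.

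I will first derive $\calS\!, \calH \Vvdash (\false,\varphi_1)$; the companion conclusion will then follow by symmetry. Working upward from this goal, I would apply \textsf{[Cof]} with $\varphi = \varphi_1$ and the conclusion tag chosen to be $\false$, reducing the goal to $\calS\!, \calH \cup \{(\false,\varphi_1)\} \Vvdash (\false,\varphi_1)$. Next I would apply \textsf{[Cut]} with cut formula $\varphi' = \varphi_2$, which splits the goal into two premises. The first premise, $\calS\!, \calH \cup \{(\false,\varphi_1)\} \Vvdash (\false,\varphi_2)$, is exactly the first assumption of the theorem. The second premise, $\calS\!, \calH \cup \{(\false,\varphi_1), (\false,\varphi_2)\} \Vvdash (\false,\varphi_1)$, is obtained from the second assumption of the theorem by an application of \textsf{[Clr]} adjoining $(\false,\varphi_1)$ as an unused hypothesis. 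This closes the proof of $\calS\!, \calH \Vvdash (\false,\varphi_1)$. The proof of $\calS\!, \calH \Vvdash (\false,\varphi_2)$ is obtained by swapping $\varphi_1$ and $\varphi_2$ and the corresponding two assumptions of the theorem.

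I do not anticipate a real obstacle; the only subtle point is the tag arithmetic. \textsf{[Cof]} requires its premise to carry the tag $\false$ while allowing any tag in its conclusion, and it is this mismatch that lets the overall goal $(\false,\varphi_1)$ close a loop against the $(\false,\varphi_1)$ freshly installed in the hypotheses; without the tag discipline, \textsf{[Cof]} would instantly discharge any formula via \textsf{[Hyp]}. Once the pattern ``\textsf{[Cof]}, then \textsf{[Cut]}, then \textsf{[Clr]}'' is identified and one checks that the two assumptions of the theorem match the shapes demanded by \textsf{[Cut]} and (after weakening) by its second premise, the derivation is mechanical, and the symmetrical compositionality property falls out of the design of the three hypothesis-management rules.
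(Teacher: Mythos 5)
Your derivation is correct and is essentially identical to the paper's own proof: both apply \textsf{[Cof]} to plant $(\false,\varphi_1)$ in the hypotheses, then \textsf{[Cut]} on $\varphi_2$, closing the first premise with the first assumption and the second premise with the second assumption via \textsf{[Clr]}, and conclude by symmetry. Your remark on the tag discipline of \textsf{[Cof]} matches the paper's design rationale as well.
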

  \begin{proof}
    The statement is symmetrical in $\varphi_1, \varphi_2$; we prove it for the first formula. The rule\textsf{[Cof]}
    generates one successor for the root $N_0$ labelled $\calS\!, \calH  \Vvdash (\false, \varphi_1)$:
  $N_1$, labelled   $\calS\!, \calH \cup \{ (\false,\varphi_1)\} \Vvdash (\false, \varphi_1)$. From $N_1$, the rule \textsf{[Cut]}
    generates two successors, $N_{2,1}$ labelled $\calS\!, \calH \cup \{ (\false,\varphi_1)\} \Vvdash (\false, \varphi_2)$, which we assumed as a hypothesis,
    and $N_{2,2}$, labelled $\calS\!, \calH \cup \{ (\false,\varphi_1), (\false,\varphi_2)\} \Vvdash (\false, \varphi_1)$. From $N_{2,2}$ the rule \textsf{[Clr]}
    removes the first hypothesis and generates a node labelled $\calS\!, \calH \cup \{ (\false,\varphi_2)\} \Vvdash (\false, \varphi_1)$, which we assumed as a hypothesis as well.
  \end{proof}

  \begin{example}
    \label{ex:asymcompps3}
    In Example~\ref{ex:astra} we tried to prove $\calS \models (c = c_0)\reach (c =c_2 \wedge s =  m\times (m+1)/2)$ using the asymmetrical compositionality of $\Vdash$, and noted that a certain
    proof step was impossible because of the asymmetry of the \textsf{[Tra]} rule of $\Vdash$.
    We show that $\Vvdash$ does not suffer from the same issue. The problem, reformulated in terms of $\Vvdash$, was to start the sequent
  (iii')  $\calS,\{(\false,(c = c_0) \reach (c = c_1 \wedge i = 0 \wedge s = 0))\} \Vdash  (\false, (c = c_0)\reach (c =c_2 \wedge s =  m\times (m+1)/2))$ and  to use  the  \textsf{[Tra]} rule
  in order to  split this sequent in two:  $\calS,\{(\false,(c = c_0) \reach (c = c_1 \wedge i = 0 \wedge s = 0))\} \Vdash (\false, (c = c_0)\reach (c = c_1 \wedge i = 0 \wedge s = 0))$ and
   then $\calS,\{(\false,(c = c_0) \reach (c = c_1 \wedge i = 0 \wedge s = 0))\} \Vdash (\false, (c = c_1  \wedge i = 0 \wedge s = 0)\reach  \reach (c = c_1 \wedge i = 0 \wedge s = 0))$. This inference step, which we have just performed above, was not a problem for the $\Vvdash$ proof system.
 \end{example}   
  
Finally, we show how to combine compositionality w.r.t. transition systems and w.r.t. formulas.
  The following lemma says that $\Vvdash$ is compositional w.r.t.
  transition systems even in the presence of hypotheses. 
    
  \begin{lemma}
    \label{lem:compts}
    If $ \calS'\!, \calH  \Vvdash (b,\varphi)$ and  $\calS' \leftslice \calS$ then
    $ \calS\!, \calH \Vvdash (b,\varphi)$.
  \end{lemma}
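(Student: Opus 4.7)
The plan is to proceed by structural induction on the finite proof tree witnessing $\calS'\!,\calH \Vvdash (b,\varphi)$. This is legitimate because, by design, $\Vvdash$ is a purely inductive proof system (every rule in Figure~\ref{fig:pf3} is $\mu$-labelled), so all derivations are finite trees. For each case I reapply the same rule instance over $\calS$, feeding it the derivations produced by the induction hypothesis applied to the immediate sub-derivations.

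For the rules \textsf{[Hyp]}, \textsf{[Trv]}, \textsf{[Str]}, \textsf{[Spl]}, \textsf{[Tra]}, \textsf{[Cut]}, \textsf{[Cof]}, and \textsf{[Clr]} this replay is immediate: none of their premises or side-conditions mentions the underlying transition system, so the induction hypothesis applied to each premise immediately supplies what the same rule instance needs over $\calS$. The only case requiring genuine work is \textsf{[Stp]}, whose premise $\calS'\!,\calH \Vvdash (\true, \post_{\calS'} l \reach r)$ and whose side-condition $l \sqcap \final_{\calS'} \sqsubseteq \bot$ both depend intrinsically on the transitions of $\calS'$.

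To handle \textsf{[Stp]}, I would extract two observations from Definition~\ref{def:comp}. First, $l \sqcap \final_\calS \sqsubseteq \bot$: a state witnessing $l$ and lying in $\final_\calS \cap S'$ has no outgoing $\to$-edges and hence no outgoing $\to'$-edges, so it is also in $\final_{\calS'}$, contradicting the original side-condition. Second, $\post_{\calS'} l$ and $\post_\calS l$ coincide on $S'$: clause (iii) of Definition~\ref{def:comp} forces every $\to$-transition out of an $l$-state to remain inside $S'$ (otherwise its source would be in $\final_{\calS'}$, again excluded by the side-condition), and clause (ii) then identifies that $\to$-transition with a $\to'$-transition. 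Combining these, the induction hypothesis $\calS\!,\calH \Vvdash (\true, \post_{\calS'} l \reach r)$ becomes $\calS\!,\calH \Vvdash (\true, \post_\calS l \reach r)$, and an application of \textsf{[Stp]} over $\calS$ yields the desired $\calS\!,\calH \Vvdash (b, l \reach r)$.

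The main obstacle is exactly this \textsf{[Stp]} case: the induction step must transport a semantic fact about the transition relation across the passage from $\calS'$ to $\calS$, and the whole burden rests on clause (iii) of Definition~\ref{def:comp}, which is what rules out silent escapes from $S'$ into $S \setminus S'$ while \textsf{[Stp]} reasons symbolically one step at a time. Once this case is dispatched, the remainder of the induction is routine.
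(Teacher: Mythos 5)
The paper states Lemma~\ref{lem:compts} without proof (the proofs are deferred to the full version), so there is no in-paper argument to compare against; judged on its own, your proof is correct and is the natural one. Structural induction on the derivation is legitimate because every rule of $\Vvdash$ is $\mu$-labelled and proofs are finite trees by definition, and you correctly isolate \textsf{[Stp]} as the only rule whose premise and side-condition depend on the transition relation: the side-condition transfers because $\to'\;\subseteq\;\to$ makes $\final_\calS \sqcap S' \sqsubseteq \final_{\calS'}$, and the premise transfers because the side-condition $l \sqcap \final_{\calS'} \sqsubseteq \bot$ together with clause (iii) of Definition~\ref{def:comp} rules out transitions from $l$-states into $S\setminus S'$, after which clause (ii) gives $\post_\calS l = \post_{\calS'} l$. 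The one step you (like the paper) leave tacit is the coercion of state predicates and of the formulas in $\calH$ from $\Pi_{\calS'}$ to $\Pi_\calS$ — extending them by $\false$ outside $S'$ — which is needed even to make the conclusion $\calS,\calH \Vvdash (b,\varphi)$ well-typed and which ensures that the side-condition $l \sqsubseteq l'$ of \textsf{[Str]} and the equality of the two $\post$ images survive the change of state space; this is a presentational gap shared with the paper rather than a flaw in your argument.
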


  Combining Theorem~\ref{lem:compform} and Lemma~\ref{lem:compts} we obtain as a corollary the following theorem, which combines symmetrical compositionality w.r.t. formulas and compositionality w.r.t. transition systems.

  \begin{theorem}
    \label{th:compps3}
    If, for $i  \in \{0, 1\}$, $\calS_i \leftslice \calS$  and
    $\calS_i, \calH \cup\{(\false, \varphi_{1-i})\} \Vvdash (\false,\varphi_i)$,  then,  for $i  \in \{0, 1\}$, $\calS\!, \calH \Vvdash (\false,\varphi_i)$.
  \end{theorem}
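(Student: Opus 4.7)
The plan is to combine the two previously-established compositionality results mechanically: first lift each hypothesis from its sub-transition system $\calS_i$ up to the ambient system $\calS$ via Lemma~\ref{lem:compts}, and then apply the symmetric compositionality of $\Vvdash$ with respect to formulas (Theorem~\ref{lem:compform}) to discharge the two extra hypotheses simultaneously.

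More concretely, I would proceed as follows. First, I unfold the statement: the hypotheses give, for $i = 0$, a derivation $\calS_0, \calH \cup \{(\false, \varphi_1)\} \Vvdash (\false, \varphi_0)$ together with $\calS_0 \leftslice \calS$, and symmetrically for $i = 1$. Lemma~\ref{lem:compts} applies even in the presence of hypotheses, so I apply it twice, once per index $i$, with hypothesis set $\calH \cup \{(\false, \varphi_{1-i})\}$, yielding
\[
\calS\!,\, \calH \cup \{(\false, \varphi_1)\} \Vvdash (\false, \varphi_0)
\quad\text{and}\quad
\calS\!,\, \calH \cup \{(\false, \varphi_0)\} \Vvdash (\false, \varphi_1).
\]
This is the exact shape of the two premises of Theorem~\ref{lem:compform} (with hypothesis set $\calH$ and formulas $\varphi_1, \varphi_2$ instantiated to $\varphi_0, \varphi_1$), so a single invocation of that theorem delivers $\calS\!, \calH \Vvdash (\false, \varphi_0)$ and $\calS\!, \calH \Vvdash (\false, \varphi_1)$, which is what we want.

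Since both building blocks are already in place, there is essentially no obstacle: the theorem is a direct corollary, as the author indicates. The only point requiring a small amount of care is to verify that Lemma~\ref{lem:compts} is genuinely insensitive to which hypotheses appear in $\calH$ on its left-hand side, so that it may be instantiated with the augmented set $\calH \cup \{(\false, \varphi_{1-i})\}$ rather than $\calH$ alone. Once that observation is made, the composition of the two lemmas is straightforward and the proof is a two-line chain of implications.
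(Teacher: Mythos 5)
Your proof is correct and matches the paper's intended argument exactly: the paper presents Theorem~\ref{th:compps3} as a direct corollary obtained by combining Lemma~\ref{lem:compts} (applied with the augmented hypothesis sets $\calH \cup \{(\false,\varphi_{1-i})\}$) with Theorem~\ref{lem:compform}, which is precisely the two-step chain you describe.
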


\begin{example}
    \label{ex:gcd}
    We sketch the verification of another infinite-state transition system, denoted by the
    state machine in Figure~\ref{fig:gcd}, which computes the greatest common divisor of two strictly positive natural numbers. The obtained proof is not, by far, the simplest; for such simple systems a global (non-compositional) proof is much shorter. Our goal here is to use all the  compositionality features of $\Vvdash$ embodied in Theorem~\ref{th:compps3}. 
    
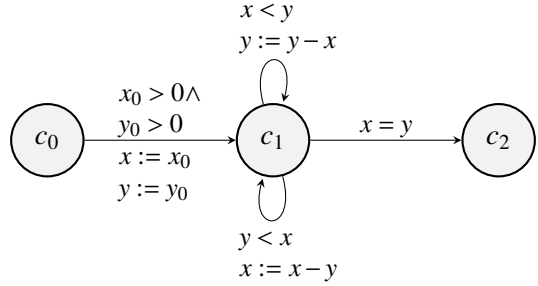
\begin{figure}[t]
\begin{center}
  \begin{tikzpicture}
\node[state] (c0) {$c_0$};     
\node[state, right of=c0] (c1) {$c_1$}; 
\node[state,right  of=c1] (c2) {$c_2$};

\path[->] (c0) edge [above] node[yshift= -10mm]{ \small  $\begin{array}{l} x_0> 0 \wedge \\ y_0 > 0   \\ x := x_0 \\ y := y_0 \end{array}$} (c1);
\path[->] (c1) edge [loop above] node[yshift=-1mm,  xshift=2mm]{\small  $\begin{array}{l} x < y \\ y := y - x \end{array}$} ();
\path[->] (c1) edge  [loop below] node[yshift=1mm, xshift=2mm] {\small  $\begin{array}{l} y < x \\ x := x - y \end{array}$} ();
\path[->] (c1) edge [above] node[yshift=-1mm]{\small  $x = y$} (c2);
 \end{tikzpicture}
\end{center}
\caption{\label{fig:gcd} Computing a greatest common divisor}
 \end{figure}

The state machine has three control nodes and operates with four natural-number variables: $x$, $y$, $x_0$ and $y_0$. The last two variables are ``symbolic constants'', not modified by the transitions of the state machine, whose greatest-common divisor the machine is supposed to compute. On the leftmost transition $x$ and $y$ are initialised to $x_0$ and $y_0$, provided that
the guard $x_0 > 0 \wedge y_0> 0$ holds. On the upper self-loop arrow, $x$ is substracted from $y$ provided the guard $x < y$ holds. The lower self-loop arrow inverses the roles of $x$ and $y$. The rightmost arrow is taken provided its guard
$x = y$ holds.  The state-machine denotes an infinite-state transition system $\calS$ with state-set
$\{c_0, c_1, c_2\} \times \mathbb{N}^{4}$  and transition relation 
{\small
$\bigcup_{x,y,x_0,y_0 \in \mathbb{N}, x_0 > 0, y_0 > 0}\{((c_0,x,y,x_0,y_0), (c_1, x_0,y_0,x_0, y_0))\} \cup$
$\bigcup_{x,y,x_0, y_0 \in \mathbb{N}, x < y} \{((c_1,x,y,x_0,y_0), (c_1,x,y-x,x_0,y_0))\} \cup$
$\bigcup_{x,y,x_0,y_0 \in \mathbb{N}, y < x}  \{((c_1,x,y,x_0,y_0), (c_1, x-y,y,x_0,y_0))\} \cup$
$\bigcup_{x,y,x_0,y_0 \in \mathbb{N}, x = y}\{((c_1,x,y,x_0,y_0), (c_2, x,y,x_0, y_0))\}$.
}

We identify two components of this transition system: $\calS_1$, encoded by the upper self-loop and rightmost arrow, and
$\calS_2$, encoded by the lower self-loop and rightmost arrow. Their state-spaces are both $\{c_1, c_2\} \times \mathbb{N}^{4}$.
Their  transition relations are $\bigcup_{x,y,x_0, y_0 \in \mathbb{N}, x < y} \{((c_1,x,y,x_0,y_0), (c_1,x,y-x,x_0,y_0))\} \cup$
$\bigcup_{x,y,x_0,y_0 \in \mathbb{N}, x = y}\{((c_1,x,y,x_0,y_0), (c_2, x,y,x_0, y_0))\}$ and 
$\bigcup_{x,y,x_0,y_0 \in \mathbb{N}, y < x}  \{((c_1,x,y,x_0,y_0), (c_1, x-y,y,x_0,y_0))\} \cup$
$\bigcup_{x,y,x_0,y_0 \in \mathbb{N}, x = y}\{((c_1,x,y,x_0,y_0), (c_2, x,y,x_0, y_0))\}$,  induced by their respective arrow subsets.
We will show

\smallskip

(1) $\calS \models (c = c_0 \wedge x_0 > 0 \wedge y_0 > 0) \reach (c =c_2 \wedge x = y \wedge x =\mathit{gcd}(x_0,y_0))$.

\smallskip \noindent
which is the functional correctness of the system. Using the soundness of $\Vvdash$ then the $\textsf{[Tra]}$ rule, the latter reduces to (2) $\calS \Vvdash (\false,(c = c_0 \wedge x_0 > 0 \wedge y_0 > 0) \reach (c =c_1 \wedge x = y_0 \wedge y = y_0 \wedge x_0 > 0 \wedge y_0 > 0))$ and 
(3) $\calS \Vvdash (\false,(c = c_1 \wedge x = y_0 \wedge y = y_0 \wedge x_0 > 0 \wedge y_0 > 0) \reach (c =c_2 \wedge  x = y \wedge x =\mathit{gcd}(x_0,y_0)))$. Now, (2) is  discharged by the sequence of rules~\textsf{[Stp]}, \textsf{[Str]} and \textsf{[Trv]}, thus, we focus on (3).
Using several times
\textsf{[Str]} and \textsf{[Spl]}, and also ($x = x_0 \wedge y = y_0) \sqsubseteq  (\mathit{gcd}(x,y)  = \mathit{gcd}(x_0,y_0))$, (3) reduces to proving the   subgoals

(4) :  $\calS \Vvdash (\false,(c_1, \mathit{gcd}(x,y)  = \mathit{gcd}(x_0,y_0) \wedge x_0 > 0 \wedge y_0 > 0 \wedge x < y ) \reach (c = c_2 \wedge x = y \wedge x =\mathit{gcd}(x_0,y_0)))$;

(5) : $\calS \Vvdash (\false,(c_1,  \mathit{gcd}(x,y)  = \mathit{gcd}(x_0,y_0) \wedge x_0 > 0 \wedge y_0 > 0 \wedge x = y) \reach (c = c_2 \wedge x = y \wedge x =\mathit{gcd}(x_0,y_0)))$;

(6) : $\calS \Vvdash (\false,(c_1,  \mathit{gcd}(x,y)  = \mathit{gcd}(x_0,y_0) \wedge x_0 > 0 \wedge y_0 > 0 \wedge y < x) \reach (c = c_2 \wedge x = y \wedge x =\mathit{gcd}(x_0,y_0)))$.

\noindent The subgoal (5) is immediately discharged by applying the sequence of rules~\textsf{[Stp]}, \textsf{[Str]} and \textsf{[Trv]}.

The two other ones we  prove by reducing them, thanks to Theorem~\ref{th:compps3} to the two following subgoals, with
 $\varphi_1 \; \eqbydef \; (c = c_1\wedge \mathit{gcd}(x,y)  = \mathit{gcd}(x_0,y_0) \wedge x_0 > 0 \wedge y_0 > 0 \wedge x < y) \reach (c = c_2 \wedge x = y \wedge x =\mathit{gcd}(x_0,y_0))$ and 
$\varphi_2 \; \eqbydef \; (c = c_1\wedge \mathit{gcd}(x,y)  = \mathit{gcd}(x_0,y_0) \wedge x_0 > 0 \wedge y_0 > 0 \wedge y < x) \reach (c = c_2 \wedge x = y \wedge x =\mathit{gcd}(x_0,y_0))$:

 (7) : $\calS_1, \{(\false, \varphi_2)\} \Vvdash (\false, \varphi_1)$ and (8) : $\calS_2, \{(\false, \varphi_1)\} \Vvdash (\false, \varphi_2)$. We prove (7), the proof of (8) is similar. Using \textsf{[Tra]}, (7) reduces to (9) :
$\calS_1, \{(\false, \varphi_2)\} \Vvdash (\false,  (\phi \wedge x < y) \reach (\phi \wedge y \leq x)))$ and (10) :
$\calS_1, \{(\false, \varphi_2)\} \Vvdash  (\false, (\phi \wedge y \leq x) \reach (c = c_2 \wedge x = y \wedge x =\mathit{gcd}(x_0,y_0)))$ where $\phi \; \eqbydef \; (c= c_1 \wedge \mathit{gcd}(x,y)  = \mathit{gcd}(x_0,y_0) \wedge x_0 > 0 \wedge y_0 > 0)$.

The  subgoal (9) is  proved after simplification by \textsf{[Clr]} using Lemma~\ref{lem:redtoinv3} with $q \; \eqbydef \;  (\phi \wedge x < y) $.

For the subgoal (10), it is first decomposed using \textsf{[Str]} then  \textsf{[Spl]} into (11) : 
$\calS_1, \{(\false, \varphi_2)\} \Vvdash  (\false, (\phi \wedge y = x) \reach (c = c_2 \wedge x = y \wedge x =\mathit{gcd}(x_0,y_0)))$ -  which is easily discharged by  \textsf{[Stp]}, \textsf{[Str]} then  \textsf{[Trv]} -  and

(12) : $\calS_1, \{(\false, \varphi_2)\} \Vvdash  (\false, (\phi \wedge y < x) \reach (c = c_2 \wedge x = y \wedge x =\mathit{gcd}(x_0,y_0)))$. Using \textsf{[Cof]},  (12) becomes

(13) : $\calS_1, \{(\false, \varphi_2), (\false, (\phi \wedge y < x) \reach \psi)\} \Vvdash  (\false, (\phi \wedge y < x) \reach \psi)$ with $\psi \eqbydef (c = c_2 \wedge x = y \wedge x =\mathit{gcd}(x_0,y_0))$.

\noindent We now apply \textsf{[Stp]} followed by \textsf{[Str]} to (13) and get (14) :  $\calS_1, \{(\false, \varphi_2), (\false, (\phi \wedge y < x) \reach \psi)\} \Vvdash  (\true, \phi\reach \psi)$.

After several applications of \textsf{[Str]} and  \textsf{[Spl]}
(14) is reduced to proving the three last following subgoals:

(15) :  $\calS_1, \{(\false, \varphi_2), (\false, (\phi \wedge y < x) \reach \psi)\} \Vvdash  (\true,( \phi \wedge y < x)\reach \psi)$, discharged using \textsf{[Hyp]};

(16) :  $\calS_1, \{(\false, \varphi_2), (\false, (\phi \wedge y = x) \reach \psi)\} \Vvdash  (\true,( \phi \wedge y < x)\reach \psi)$, discharged using \textsf{[Stp]}, \textsf{[Str]}, and \textsf{[Trv]}; 

(17) : $\calS_1, \{(\false, \varphi_2), (\false, (\phi \wedge x < y) \reach \psi)\} \Vvdash  (\true,( \phi \wedge y < x)\reach \psi)$, discharged using \textsf{[Hyp]} by noting that $\varphi_2$ is $( \phi \wedge y < x)\reach \psi$. All the subgoals have been  discharged, and the proof of (1) is complete.

  \end{example}

 \section{Implementations in Isabelle/HOL and Coq}
 We have implemented all the proof systems in Coq and (currently) the first two ones in Isabelle/HOL as well. Our initial goal was to use only Coq, and the reason we also tried Isabelle/HOL (learning it in the process) was that we wanted a ``second opinion'' when faced with difficulties using Coq's coinduction.
 
 The Isabelle/HOL implementation for proof systems $\vdash$ and $\Vdash $is essentially the same as the one described in the paper. The tool automatically generates and proves
 induction and coinduction principles from inductive and coinductive datatypes or predicates.
 Proof commands \texttt{induction} resp. \texttt{coinduction} apply an induction (resp., a coinduction principle) by instantiating the predicate therein via unification with the conclusion, possibly generalised by universally quantifying some variables, (resp., with a conjunction of hypotheses, possibly generalised by existentially quantifying some variables). The overall level of automation is high, which is pleasant to use in practice, the only down side being that users might not understand what is going on. Overall, the proofs in this paper are sketches of the formal Isabelle/HOL proofs, which we did with a lower automation level in order to be able to understand and describe them.

 The Coq implementation for the proof systems $\vdash$ and $\Vdash $ is rather different from the above, because support for coinduction in Coq is also rather different. The standard way to perform a proof by coinduction in Coq is to use the  \texttt{cofix} tactic, which (like the \textsf{[Cof]} rule in our third proof system that emulates it), copies the current goal's conclusion as a new hypothesis, which can only be used after appropriate ``progress'' has been made in the interactive proof. A proof by coinduction in Coq is ultimately
 a well-formed corecursive function, where well-formedness is defined as a  syntactical guardedness condition, which is quite complex  in the theory~\cite{DBLP:conf/types/Gimenez94}, and even more so in the implementation.  
 We have nonetheless managed to prove the soundness and completeness of $\vdash$ using this tactic:
 \textsf{cofix}-style proofs of soundness and completeness for $\vdash$, described in standard mathematical notation,
   are reported in~\cite{rusu:hal-01962912}.
   For $\Vdash$, however, \textsf{cofix} became useless because, for some reason,
   it does not accept to be mixed in a proof by induction. Fortunately, there is a better version, \texttt{pcofix}, part of a
   Coq package called Paco, based on an extenstion of Knaster-Tarski coinduction called \emph{parameterised}
   coinduction~\cite{DBLP:conf/popl/HurNDV13}. Even though the theory is an extension of Knaster-Tarski, anything related to
   fixpoints of functionals is hidden from the user; a set of tactics, including \textsf{pcofix}, leaves the user with the impression that they are using \textsf{cofix} but without its issues.
   
   The soundness proof of $\Vvdash$, only in Coq for now, generally follows the lines shown in this paper.
   It is also completely different from the corresponding proofs for the two other proof systems: it does not use general (co)induction principles, but one well-founded induction principle specific to our problem.

\section{Conclusions and Future Work}
We have presented three proof systems for Reachability Logic on Transition Systems, which use coinduction and induction in different proportions. We have proved their soundness and completeness, and have noted that the more inductive a proof system is, and the more specialised its coinduction style is with respect to our problem domain, the more compositional  the proof system is, but the harder its soundness proof.
Mechanisations of the proof systems in Isabelle/HOL and Coq have also been briefly presented.

%% In future work we shall finish  some of the formal proofs (e.g., the proof of compositionality with respect to transition systems is not yet done;  the third proof system is proved only in Coq, not  in Isabelle/HOL).
%\david{The words ``finish'' and ``e.g.'' in the previous sentence might let the reviewer to wonder what is really proved formally. I suggest the following rephrasing:}
In future work we shall make the formal proof of compositionality with respect to transition systems; and prove the third proof system (currently only proved in Coq) in Isabelle/HOL. We are also planning to port Knaster-Tarski coinduction to Coq, and redo the proofs in this paper in that style, in order to obtain Coq proofs closer in spirit to those in the paper and in Isabelle/HOL. A medium-term project is to use the most compositional proof system, among the three proposed ones, for verifying monadic code, a sizeable amount of which is available to us from earlier projects; and, in the longer term, to enrich our proof system with assume-guarantee-style compositional reasoning related to parallel composition.

 \paragraph*{Acknowledgment.}
\hspace*{-0.4cm}  We would like to thank Andrei Popescu for his help with coinduction in Isabelle/HOL.
We acknowledge the support of the CNRS-JSPS Joint Research Project ``FoRmal tools for IoT sEcurity'' (PRC2199), and thank all the participants of this project for fruitful discussions.

%% \begin{todo}
%%   application to program verification

%%   also other tools needed, e.g., weakest precondition calculus exploiting language syntax.
%% \end{todo}  
%\nocite{*} 

\bibliographystyle{eptcs}

\end{document}